\documentclass[preprint,12pt]{elsarticle}
\usepackage[utf8]{inputenc}
\usepackage{amsthm}
\usepackage{amsmath}
\usepackage{amsfonts}
\usepackage{amssymb}
\usepackage{graphicx}
\usepackage{epstopdf}
\usepackage{color}
\usepackage{multirow}
\usepackage{mathtools}
\usepackage{xcolor}
\usepackage{a4wide}
\usepackage{bm}
\usepackage{caption}
\usepackage{subcaption}
\usepackage[
  separate-uncertainty = true,
  multi-part-units = repeat
]{siunitx}
\usepackage{rotating}
\usepackage{verbatim}
\usepackage{lineno,hyperref}
\usepackage{tikz}
\tikzstyle{noeud} = [circle, draw, fill=white, inner sep=2pt]
\journal{Theoretical Computer Science}
\newtheorem{theorem}{Theorem}[section]
\newtheorem{lemma}{Lemma}[section]

\newdefinition{remark}{Remark}[section]
\newtheorem{observation}{Observation}[section]
\newdefinition{definition}{Definition} \newdefinition{example}{Example}

\usepackage[noend,linesnumbered,ruled,lined]{algorithm2e}
\SetAlFnt{\scriptsize}
\usepackage{amsfonts}
\usepackage{makecell}

\usepackage[usenames,dvipsnames]{pstricks}
\usepackage{pstricks-add}
\usepackage{epsfig}
\usepackage{pst-grad} 
\usepackage{pst-plot} 
\usepackage[space]{grffile} 
\usepackage{etoolbox} 
\makeatletter 
\patchcmd\Gread@eps{\@inputcheck#1 }{\@inputcheck"#1"\relax}{}{}
\makeatother
\usepackage{mathtools}

\usepackage{graphicx}
\usepackage{tabularx}
\usepackage{textcomp}
\usepackage{xcolor}
\usepackage[labelformat=simple]{subcaption}

\usepackage{booktabs}

\usepackage{pgf,tikz,pgfplots}
\pgfplotsset{compat=1.15}
\usepackage{mathrsfs}
\usepackage{hyperref}
\usepackage{multicol}
\SetAlFnt{\scriptsize}
\usepackage{amsfonts}
\usepackage{makecell}
\usepackage{bm}

\usepackage{multirow}
\usepackage{array}
\usepackage{makecell}
\usepackage{graphicx}
\newcolumntype{P}[1]{>{\raggedright\arraybackslash}p{#1}}

\SetCommentSty{mycommfont}

\newcommand{\fsync}{$\mathcal {F}$\textsc{sync}}
\newcommand{\ssync}{$\mathcal {S}$\textsc{sync}}

\journal{Arxiv}

\usepackage{a4wide}
\begin{document}
\usetikzlibrary{arrows}
\begin{frontmatter}
\title{Semi-Synchronous Exploration in Dynamic Graphs}

            


\author[inst1]{Ashish Saxena}

\author[inst2]{Anisur Rahaman Molla}

\author[inst1]{Kaushik Mondal}
\author[inst3]{Gokarna Sharma}

\affiliation[inst1]{organization={Indian Institute of Technology Ropar}, 
            city={Rupnagar},
            postcode={140001}, 
            state={Punjab},
            country={India}}
            
\affiliation[inst2]{organization={Indian Statistical Institute Kolkata}, 
            city={Kolkata},
            postcode={700108}, 
            state={West Bengal},
            country={India}}

\affiliation[inst3]{organization={Kent State University}, 
            city={Kent},
            postcode={44242}, 
            state={Ohio},
            country={USA}}

\begin{abstract}
We study the fundamental problem of graph exploration in dynamic graphs using mobile agents. We consider $1$-interval connected dynamic graphs, where the topology may change arbitrarily from round to round as long as the graph remains connected, and edges are assigned with the dynamic port labeling at each round. The execution follows a semi-synchronous scheduler, under which an adversary may deactivate an arbitrary subset of agents in each round. For a graph with $n$ nodes and $k$ agents, we show that exploration is impossible if the adversary can deactivate at least $ \left\lceil \frac{k}{n-2} \right\rceil - 1$ agents per round, even when agents are equipped with unbounded memory, have global communication and full visibility. This yields an upper bound, implying that exploration is solvable only when the adversary deactivates at most $\left\lceil \frac{k}{n-2} \right\rceil - 2$ agents per round. We further establish that achieving exploration at this threshold requires agents to have both $1$-hop visibility and $1$-hop communication. Finally, we present the exploration algorithm using $k$ agents when the adversary deactivates at most $ \left\lceil \frac{k}{n-2} \right\rceil - 2$ agents, assuming agents are equipped with $1$-hop visibility and global communication, and matches the adversarial deactivation bound implied by the impossibility results.
\medskip

\noindent \textbf{Keywords.} Mobile agents, Anonymous graphs, Exploration,
Dynamic graphs,
Semi-synchronous scheduler,
Deterministic algorithm.
\end{abstract}

\end{frontmatter}

\section{Introduction}
The exploration of graphs by mobile agents is a fundamental problem in distributed computing. Since the early work of Shannon~\cite{shannon1993}, the goal has been to design algorithms that enable agents to collectively visit all nodes of a network. This problem arises in autonomous systems such as mobile robots, software agents, and vehicular networks, where agents must gather information, detect faults, or disseminate data. Graph exploration has been extensively studied under a variety of assumptions (for a survey, see~\cite{Das_2019}), predominantly in static graphs where the topology remains fixed. In many modern systems, however, network topologies evolve over time (dynamic graphs), motivating the study of exploration in dynamic graphs. In such settings, maintaining connectivity is essential, as isolated nodes cannot be reached by any agent. Several models have been proposed to capture connectivity in dynamic graphs \cite{Kuhn_2010,Casteigts_2012,Michail_2014,saxena_2025}.

A graph \( G \) is \emph{port-labeled} if each edge incident to a node \( u \) is assigned a locally unique port number from the set \( \{0,1,\ldots,\deg(u)-1\} \), where $\deg(u)$ is the degree of node $u$ in $G$. An undirected edge \( \{u,v\} \) thus has two independent port numbers, one at each endpoint. The question of how port numbers behave across rounds becomes even more critical in dynamic graphs. In the literature on dynamic graphs involving mobile agents, mainly two port-labeling models have been studied. The first, we call as \textit{fixed port labeling} \cite{gotoh2018group, gotoh2019exploration, GOTOH2021, di2020distributed, bournat2017computability}, the network has a static footprint graph \( G \), and each \( \mathcal{G}_r \) is a subgraph of \( G \). The outgoing ports at node $u$ in $\mathcal{G}_r$ are the same as the outgoing ports of node $u$ in $G$. Specifically, if port \( \lambda \) at node \( u \) leads to node \( v \) in the footprint $G$, then whenever that edge corresponding port appears in any \( \mathcal{G}_r \), it always leads to node \( v \). This provides local stability in the network. The second is referred to as \textit{dynamic port labelling} \cite{saxena_2025,Ajay_dynamicdisp,Saxena_DISC}, where there is no footprint. Therefore, the degree of each node can change arbitrarily in each round, and port numbers are reassigned dynamically based on the current degree. As a result, the port $\lambda$ at node $u$ may lead to different nodes at different rounds.

Dynamic port labeling models networks such as wireless or mobile ad hoc networks, where no persistent underlying structure can be assumed and the neighborhood of a node may change arbitrarily between rounds. Since port numbers are reassigned independently in each round based on the current local topology, port $\lambda$ at node $u$ may lead to different neighbors in different rounds, providing no consistency across rounds to a traversing agent. Thus, dynamic port labeling captures a strictly more adversarial and general setting than fixed port labeling. In particular, any algorithm correct under dynamic port labeling is also correct under fixed port labeling, whereas the converse does not necessarily hold.




Our work studies the exploration problem in a $1$-interval connected $n$-node dynamic network with dynamic port labeling, where $1$-interval connected means that $\mathcal{G}_r$ is connected in every round $r$. We consider the semi-synchronous (\ssync) scheduler, in which an adversary may deactivate an arbitrary subset of agents in each round. In the next section, we present the model and the formal problem definition.

\subsection{Model and problem definition}
\noindent \textbf{Dynamic graph}: We consider a dynamic network modeled as a sequence of undirected graphs $\mathcal{G} = (V, E)$, where the node set $V$ remains fixed over time and satisfies $|V| = n$. Define $S = \big\{\{u, v\}\,|\,u, v \in V \big\}$ as the set of all possible edges, and let $\mathscr{P}(S)$ denote its power set. The function $E: \mathbb{N} \rightarrow \mathscr{P}(S)$ maps each round number $r \in \mathbb{N} \cup \{0\}$ to the set of edges $E(r)$ present at that round, yielding the snapshot graph $\mathcal{G}_r = (V, E(r))$. The dynamic graph $G$ is thus given as a sequence $\langle \mathcal{G}_0, \mathcal{G}_1, \mathcal{G}_2, \ldots \rangle$. We assume the presence of a dynamic adversary that may insert or delete any edge at the beginning of each round. For a node $v \in V$, the degree of node $v$ in $\mathcal{G}_r$ is defined as $\deg_r(v)$, which we call \emph{dynamic degree}. We define \emph{dynamic diameter} $\hat{D}:=\max_{r\geq 0} D_r$, where $D_r$ is the diameter of $\mathcal{G}_r$.

Each snapshot graph $\mathcal{G}_r$ is unweighted, undirected, and anonymous. Moreover, the graph is port-labelled: for any node $v \in \mathcal{G}_r$, the incident edges are assigned distinct local port numbers in the range $[0, \deg_r(v)-1]$. For an edge $\{u, v\}$, the port numbers at $u$ and $v$ are independently assigned and unrelated. Port labellings can differ across rounds; i.e., port numbers at a node in $\mathcal{G}_r$ may not match those in $\mathcal{G}_{r'}$ for $r \ne r'$. Nodes do not have any storage capability. In this work, the graph $\mathcal{G}(=\langle \mathcal{G}_0, \mathcal{G}_1, \mathcal{G}_2, \ldots \rangle)$ maintains the \textit{1-interval connectivity}, which says that $\mathcal{G}_r$ is connected at every round $r$.

\medskip
\noindent\textbf{Agent, cycle, hole, multinode}: We consider $k$ agents initially placed at nodes. Each agent has a unique identifier from the range $[1, n^{c}]$, for a constant $c$. An agent knows only its own ID, and is unaware of the values of $n$, $k$, and $c$. Agents are equipped with memory. An agent at node $v$ in round $r$ knows the set of ports incident to $v$ in $\mathcal{G}_r$. The algorithm proceeds in synchronous rounds. In each round $r$, every agent executes one \emph{Communicate--Compute--Move} cycle. During the \emph{communicate}, an agent exchanges information with other agents according to its communication capability. In the \emph{compute}, it computes based on the available information. Finally, in the \emph{move}, the agent either moves to a neighboring node or remains at its current node according to the computed decision. A node $v \in \mathcal{G}_r$ is called a {\em hole} in round $r$ if it contains no agent, and a {\em multinode} if it contains two or more agents.

\medskip
\noindent\textbf{Activation, fairness, move complexity}:
We consider the {\ssync} scheduler, where in each round an arbitrary subset of agents is activated. The activation pattern is controlled by an adversary with full knowledge of the agents’ algorithm and unbounded computational power, subject to the \emph{fairness} condition, i.e., every agent is activated infinitely often. An agent not activated in round $r$ is said to be \emph{inactive} in that round. While each deactivation period is finite, its duration is unbounded, and an active agent has no information about its previous activations or events during inactivity. Due to unbounded deactivation periods, we measure time complexity using \emph{move complexity} rather than round complexity. A \emph{move} is the traversal of an edge by an agent during a round, and the move complexity is the total number of such moves made by all agents until task completion.

\medskip
\noindent \textbf{Communication:}
In the $\ell_c$-hop communication model, where $\ell_c \in [0, \hat{D}]$, an agent at a node $v$ can communicate to all agents located within $\ell_c$ hops from $v$. When $\ell_c = 0$, this corresponds to \emph{face-to-face (f-2-f)} communication. When $\ell_c=\hat{D}$, the model corresponds to \emph{global communication}.

\medskip
\noindent \textbf{Visibility:} 
We consider an $\ell_v$-hop visibility model, where $\ell_v \in [0, \hat{D}]$. An agent located at a node $w \in \mathcal{G}_r$ can observe all nodes at distance at most $\ell_v$ from $w$ in round $r$. In particular, it sees the subgraph of $\mathcal{G}_r$ induced by these nodes, along with the identifiers of the agents located at them. However, an agent cannot access the internal memory of other agents nor determine whether they are active or inactive. When $\ell_v = \hat{D}$, the model corresponds to \emph{full visibility}.

\medskip
\noindent \textbf{Problem definition:} \textit{\textbf{(Exploration)} A node \( v \) is said to be \emph{visited by round \( r \)} if there exists a round \( t \in [0,r] \) such that at least one agent is located at \( v \) in round \( t \). An algorithm achieves \emph{exploration} if every node is visited at least once.}

\subsection{Technical challenges, ideas and contributions}\label{sec:tech}
The {\ssync} setting in dynamic graphs allows both agent activity and network structure to be controlled adversarially, making coordination and progress detection inherently difficult. Assume that the adversary can deactivate at most $p (\geq 1)$ agents per round. Consider a system with $(n-2)(p+1)$ agents. If there is a hole $v_h$ at round $r$, then among the remaining $n-1$ nodes, either there exists another hole, or there are at least two nodes that contain at most $p$ agents. If the adversary may choose $\mathcal{G}_r$ arbitrarily in each round, it can always attach $v_h$ as a pendant to such a node. If this node is a hole, it is at least two hops away from the active agents; if it contains at most $p$ agents, the adversary can deactivate all agents at that node, ensuring that $v_h$ remains unreachable by active agents. Thus, in every round, the adversary can maintain a node which remains unvisited. Based on this, we have the following result.

\medskip
\noindent\fbox{
\parbox{0.98\textwidth}{
\textbf{Result 1.} (refer to Theorem \ref{thm:imp} and Observation \ref{obs})
If the adversary can deactivate at most $p$ agents per round, exploration is impossible with
$k \le (n-2)(p+1)$ agents, even with full visibility, unbounded node storage, unique identifier to the nodes, global communication, unbounded memory, and complete parameter knowledge. Equivalently, for a given $k$, exploration is impossible if
$p \ge \left\lceil \frac{k}{n-2} \right\rceil - 1$.
Hence, exploration with $k$ agents is solvable only when
$p \le \left\lceil \frac{k}{n-2} \right\rceil - 2$.
}}

\medskip
Due to Result 1, if the adversary can deactivate at most $p$ agents per round, at least $(p+1)(n-2)+1$ agents are necessary. Consider a special configuration in which exactly $(n-2)$ nodes each contain $(p+1)$ agents and exactly one node contains a single agent; we denote this \underline{configuration by $\mathcal{C}^*$}. Suppose that at some round $r$ the agents are in $\mathcal{C}^*$ and are equipped with $1$-hop visibility. There is exactly one hole, $v_h$. If at some round $r_1 \ge r$ the hole $v_h$ is adjacent to a node containing $p+1$ agents, then since the adversary can deactivate at most $p$ agents, at least one agent at that node is active and can observe the hole using $1$-hop visibility and move to $v_h$. If $v_h$ is never adjacent to a node containing $p+1$ agents, then in every round it is adjacent to the node containing the single agent. Due to the fairness, the agent is eventually activated in some round and can move to $v_h$ using its $1$-hop visibility. Hence, in all cases, the hole is eventually visited. If an algorithm does not reach $\mathcal{C}^*$, the adversary can always select a node containing at most $p$ agents and deactivate all agents at that node. This shows that $\mathcal{C}^*$ is necessary to achieve the exploration using $(p+1)(n-2)+1$ agents. Consider agents equipped with $1$-hop visibility and f-2-f communication. If the initial configuration is not $\mathcal{C}^*$, the adversary can prevent the system from ever reaching $\mathcal{C}^*$. The idea is as follows. The adversary maintains a path $P = v_1 \sim v_2 \sim \cdots \sim v_n$ such that $v_n$ is unvisited by round $r-1$ and $\alpha(v_i) \ge \alpha(v_{i+1})$ for all $i \in [1,n-2]$, where $\alpha(v_i)$ is the number of agents at $v_i$ at $r$. Knowing $P$, if the adversary pre-computes that agents would reach $\mathcal{C}^*$ at the end of round $r$, it modifies the graph by deleting and adding a single edge while preserving $1$-interval connectivity. Since agents have only $1$-hop visibility, this modification affects the decisions of only a constant number of nodes. A careful analysis shows that agents fail to reach the $\mathcal{C}^*$ configuration. We have the following result.

\medskip
\noindent\fbox{
\parbox{0.98\textwidth}{
\textbf{Result 2.} (refer to Theorem \ref{thm:imp-1-hop})
Exploration is impossible to solve using $(n-2)(p+1)+1$ agents in 1-interval connected graphs when agents have $1$-hop visibility and f-2-f communication, even if nodes have unbounded storage, unique identifiers, agents have unbounded memory and complete parameter knowledge.
}}

\medskip

Consider agents equipped with global communication and $0$-hop visibility. Suppose that at some round $r$ there is a hole $v_h$. With $(p+1)(n-2)+1$ agents on the remaining nodes, either the agents are in configuration $\mathcal{C}^*$ or they are not. If the agents are not in configuration $\mathcal{C}^*$, then regardless of how the agents are distributed over the remaining $n-1$ nodes, there exist at least two nodes, say $v_1$ and $v_2$, each containing at most $p$ agents. In this case, the adversary can select one such node, say $v_1$, attach $v_h$ as a pendant to $v_1$, and deactivate all agents at $v_1$. Consequently, no active agent can reach $v_h$. If the agents are in configuration $\mathcal{C}^*$, the adversary attaches $v_h$ as a pendant to the node $v$ that contains exactly one agent. Even if all agents are active, the adversary, knowing the agents’ algorithm, can precompute the movement of the agent at node $v$ and modify the port labeling so that the port chosen by the agent does not lead to $v_h$. Since agents are equipped with $0$-hop visibility and global communication, they cannot detect such a change in ports, and the agent at node $v$ follows the same computation. In this way, the adversary can prevent exploration even when all agents are active. Based on this, we have the following result.

\medskip
\noindent\fbox{
\parbox{0.99\textwidth}{
\textbf{Result 3.}
(refer to Theorem \ref{thm:imp-global}) Exploration is impossible to solve using $(n-2)(p+1)+1$ agents in 1-interval connected graphs when agents have $0$-hop visibility and global communication, even if nodes have unbounded storage, unique identifiers, agents have unbounded memory and complete parameter knowledge.
}}

\medskip

Since {\fsync} is stronger than {\ssync}, the lower bound on $k$ under {\fsync} also applies to {\ssync}. In~\cite{saxena_2025}, the model is the same as ours, and it is shown that exploration is impossible with $n-2$ agents under {\fsync}, even with full visibility, global communication, unique node identifiers, unbounded memory, unbounded node storage, and complete knowledge of parameters. Hence, $k \ge n-1$ under {\ssync}. Using Results~1, 2, 3 and $k \ge n-1$, we obtain the following.

\medskip
\noindent\fbox{
\parbox{0.99\textwidth}{
\textbf{Result 4.}
(refer to Observation \ref{obs:necessary}) To solve exploration with $(p+1)(n-2)+1$ agents, 1-hop visibility and 1-hop communication are necessary. Consequently, if exploration is solvable with $k$ agents under an adversary deactivating at most $p$ agents per round, then
$k \ge (p+1)(n-2)+1$, which implies
$p \le \left\lfloor \frac{k-1}{n-2} \right\rfloor - 1$. Note that $ \left\lceil \frac{k}{n-2} \right\rceil -1= \left\lfloor \frac{k-1}{n-2} \right\rfloor$ for $k\geq n-1$ and $n\geq 3$ as equality $\left\lfloor \frac{A}{B} \right\rfloor = \left\lceil \frac{A+1}{B} \right\rceil - 1$ is valid for natural numbers $A,B$. Thus, the bound matches Result~1.
}}

\medskip

Further, we present an algorithm for exploration when agents are equipped with $1$-hop visibility and global communication, and when $k$ agents are present while the adversary can deactivate at most $p = \left\lfloor \frac{k-1}{n-2} \right\rfloor - 1$ agents per round. Recall the \textit{pipeline strategy}: if agents have complete knowledge of the snapshot $\mathcal{G}_r$, a multinode $w$ can push an agent along a shortest path to a hole $v$ by simultaneously moving one agent along each edge of the path, thereby filling the hole without creating new ones. This strategy has been used under the {\fsync} scheduler~\cite{das_2020, Ajay_dynamicdisp, saxena_2025, Saxena_DISC}; however, it becomes challenging under the {\ssync} scheduler. In particular, if all agents at some intermediate node are inactive in a round, progress toward the hole is blocked. To overcome this difficulty, our algorithm estimates $p$ progressively. Agents are initially unaware of the parameters $k$, $n$, and $p$. Using $1$-hop visibility, an agent can observe the number of agents at its neighboring nodes, although it cannot distinguish active agents from inactive ones. Through global communication, agents broadcast their $1$-hop information. If an agent observes that a neighboring node of some node containing an active agent has $c$ agents but receives no message from any of them in a given round, it can infer that all $c$ agents are inactive. Whenever $c$ exceeds the agent’s current estimate of $p$, the estimate is updated. Using this estimate, agents repeatedly apply the pipeline strategy to fill either a hole or a node containing only inactive agents. Once agents observe that every node contains at most $p+1$ agents, they fix a source node. Whenever at least one agent at the source is active, it initiates a pipeline toward a hole or a node with no active agents. We show that this process eventually leads the system either to configuration $\mathcal{C}^*$ or to a configuration in which every node contains at least one agent. Moreover, the algorithm can distinguish between these two situations. Based on this, we have the following algorithmic result.

\medskip
\noindent\fbox{
\parbox{0.98\textwidth}{
\textbf{Result 5.}
(refer to Theorem \ref{thm:final}) If $p \le  \left\lfloor \frac{k-1}{n-2} \right\rfloor - 1$ and agents have $1$-hop visibility and global communication, we present an algorithm that solves exploration under the {\ssync} scheduler. The move complexity is $O(k\hat{D})$, and each agent uses $O(\max\{\log n,\log p\})$ memory. The algorithm requires no prior knowledge of the parameters.
}}

\medskip
 In the next section, we compare our results with the existing literature on dynamic graphs.

\subsection{Related work}\label{sec:rel}
 Under fixed port labeling, the {\ssync} scheduler is introduced in \cite{di2020distributed}. Under the {\ssync} scheduler, an agent that attempts to move through a port whose corresponding edge is absent in a given round becomes inactive and remains so until reactivated in a later round. This behavior gives rise to the notion of an agent \emph{sleeping} at a port, which leads to several variants of the model: (i) \emph{no simultaneity} (NS), where a sleeping agent cannot move and has no guarantee of activation when the edge reappears; (ii) \emph{passive transport} (PT), where a sleeping agent is automatically transported if the edge is present in the same round; and (iii) \emph{eventual transport} (ET), where a sleeping agent cannot move but is guaranteed to be activated in a round in which the edge is present. In the dynamic port-labeling model considered in this work, the edge corresponding to each port is always present. Consequently, the notion of sleeping agents and sleep-based transport models is not applicable. We therefore adopt the {\ssync} model as used for static graphs, treating each snapshot $\mathcal{G}_r$ as a port-labeled static graph. In this setting, in each round $r$, an arbitrary subset of agents is active. To the best of our knowledge, there is no prior work that considers the {\ssync} scheduler in the dynamic port-labeling model.

Graph exploration has been extensively studied in the literature~\cite{flocchini2012searching, flocchini2013exploration, ilcinkas2011power, ilcinkas2018exploration, bournat2016self, bournat2017computability, di2020distributed, gotoh2018group, GOTOH2021, Saxena_DISC, saxena_2025}. Exploration in general dynamic graphs has been studied under various connectivity models, including 1-interval connected graphs~\cite{saxena_2025}, $\ell$-bounded 1-interval connected graphs and temporally connected graphs~\cite{GOTOH2021}, and connectivity time dynamic graphs~\cite{Saxena_DISC}. Among these, ET-based {\ssync} is considered in~\cite{GOTOH2021}. Since our focus is on 1-interval connected graphs and the {\ssync} scheduler, we discuss in detail the results of \cite{saxena_2025,GOTOH2021}. The dynamic graph model of~\cite{saxena_2025} is identical to ours except that agents operate under the {\fsync} scheduler. They show that $n-2$ agents are insufficient for exploration and provide an algorithm that succeeds with $n-1$ agents under $1$-hop visibility and global communication. In~\cite{GOTOH2021}, exploration is studied under the ET-based {\ssync} scheduler with fixed port labeling, assuming anonymous agents and node storage. In $\ell$-bounded $1$-interval connected graphs, they provide a tight analysis by proving that $2\ell$ agents are insufficient while $2\ell+1$ agents suffice, where $\ell$ is the maximum number of edge deletions per round. Since $\ell \le m-n+1$, $O(n^2)$ agents are always enough for exploration. In contrast, our results apply to any number of agents $k$ and provide a tight threshold on the adversary’s deactivation power as a function of $k$ and $n$. Specifically, given any $O(n^c)$ agents for $c\geq 1$, there exists a value of $p$ such that exploration becomes impossible beyond that value. Moreover, the approach in~\cite{GOTOH2021} relies on fixed port labeling and node storage: one agent waits for a missing edge while the others execute a rotor-router strategy. Because at most $\ell$ edges can be deleted, at most $2\ell$ agents become stuck, and one extra agent guarantees exploration due to the ET model. Such a strategy does not apply in our setting due to the dynamic port labeling and the lack of node storage.

\section{Impossibility results}\label{sec:imp}
In this section, we present the impossibility results. The high-level approach is outlined in Section~\ref{sec:tech}.

\begin{theorem}\label{thm:imp}
    ($n\geq 3, p\geq 1$) If the adversary deactivates at most $p$ agents per round, exploration is impossible with $(n-2)(p+1)$ agents in 1-interval connected graphs, even with full visibility, unbounded node storage, unique identifier to the nodes, global communication, unbounded memory, and complete parameter knowledge.
\end{theorem}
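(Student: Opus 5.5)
We construct an adversary strategy that, against any algorithm, keeps some fixed node unvisited forever while respecting 1-interval connectivity and fairness. Fix $k=(n-2)(p+1)$ agents. If the initial configuration has no hole, then $k\ge n$; we handle this case by the counting below, and the adversary simply needs one hole to start with. To guarantee this, we use that the adversary also chooses the initial placement, or, if the placement is given, we argue by pigeonhole as follows. Call a node \emph{weak} in round $r$ if it is a hole or holds at most $p$ agents. Since $k=(n-2)(p+1)$, at most $n-2$ nodes can hold at least $p+1$ agents. Hence in every round at least two nodes are weak.

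The invariant we maintain is: at the start of each round $r$ there is a node $v_h$ that has never been visited, and $v_h$ is a hole. At round $r$, among the other $n-1$ nodes, at most $n-2$ hold $\ge p+1$ agents, so at least one node $w\neq v_h$ is weak. The adversary builds $\mathcal{G}_r$ as follows. It takes any connected graph (say a path) on $V\setminus\{v_h\}$ and attaches $v_h$ as a pendant leaf to $w$ only, so $\mathcal{G}_r$ is connected. If $w$ holds $1\le c\le p$ agents, the adversary deactivates exactly those agents. If $w$ is a hole, the only neighbour of $v_h$ is empty. In either case, no active agent is adjacent to $v_h$. An agent moves at most one edge per round, so no agent enters $v_h$ in round $r$, and $v_h$ remains an unvisited hole at round $r+1$. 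Full visibility, global communication, unbounded memory, node storage, node IDs and parameter knowledge are irrelevant, because the argument is purely about who can physically reach $v_h$.

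The main obstacle is fairness: every agent must be activated infinitely often, while the adversary deactivates only at most $p$ agents per round. In each round the adversary deactivates only the $\le p$ agents at $w$ and activates everyone else. We must check that no agent is deactivated forever. We handle this by a rotation argument. Whenever several weak nodes other than $v_h$ exist (there are at least two weak nodes overall, but one may be $v_h$ itself), the adversary chooses among them. That choice alone does not guarantee fairness, so instead we note a freedom in the construction. If an agent at $w$ were deactivated in every round from some point on, it would stay at $w$. The adversary can then alternate: in rounds where it prefers, it may instead attach $v_h$ to a different weak node, or, if $w$ is the unique weak node besides $v_h$, keep it. To close this gap we strengthen the construction. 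Since deactivated agents do not move, the adversary may keep a node $w$ with all its agents frozen. To respect fairness we therefore let the adversary occasionally activate the agents at $w$ in a round where $v_h$ is attached elsewhere.

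Specifically, after every such freeze the adversary looks for a round in which another weak node $w'\ne v_h,w$ exists. If $w$ is the only weak node besides $v_h$ for a long stretch, the configuration of other nodes has exactly $n-2$ nodes with $\ge p+1$ agents, hence exactly $p+1$ each, and $w$ holds $0$ agents (a hole) since the total is $(n-2)(p+1)$. In that case nothing is deactivated at all. So whenever some agents must be frozen, $w$ holds $c\ge1$ agents, which forces at least two nodes outside $v_h$ to be weak. The adversary then alternates the pendant attachment between two weak nodes in consecutive rounds, choosing the one whose agents were least recently activated. Since agents at the other weak node are active, every agent gets activated within boundedly many consecutive rounds, which establishes fairness. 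This counting step (that $c\ge1$ implies a second weak node) is the crux, and it completes the proof.
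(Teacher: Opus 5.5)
Your approach is the paper's: keep an initial hole $v_h$ attached as a pendant to a weak node $w\neq v_h$, and freeze the at most $p$ agents at $w$ (nobody, if $w$ is a hole). The count $k=(n-2)(p+1)$ then shows that a non-empty weak node forces a second weak node outside $v_h$. This is exactly the paper's Case~2 (a unique hole leaves at least two non-empty nodes with at most $p$ agents), and attaching to another hole is its Case~1. The reachability part is correct.

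The step that needs repair is the fairness rule. ``Alternate between two weak nodes, choosing the one whose agents were least recently activated'' is not well defined. Agents move, so the set of weak nodes, and which agents sit at a given node, changes from round to round. Read literally, with ``choosing'' meaning the node to which $v_h$ is attached, it is backwards: the agents frozen in round $r$ are exactly the least recently activated ones in round $r+1$, so the rule would re-freeze them, possibly forever.

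The precise rule, which is the paper's, is: in round $r+1$, attach $v_h$ to a weak node $w\neq v_h$ that hosts none of the agents frozen in round $r$. Such a node exists for two reasons. The agents frozen in round $r$ did not move, so they all sit at the single node where they were frozen. If there is only one weak node besides $v_h$, your counting shows it is a hole. Otherwise there are at least two candidates, and at most one of them hosts the frozen agents. Under this rule no agent is inactive in two consecutive rounds, which gives fairness directly.

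Also, your pigeonhole remark does not produce an initial hole, since weak does not mean empty. As in the paper, just note that without an initial hole exploration is trivial, so the statement concerns initial configurations that contain a hole.
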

\begin{proof}
 Let \(v\) be a hole initially. If no hole existed initially, exploration would already be complete. We show that \(v\) remains a hole forever. For every round \(r\ge 0\), we construct a dynamic graph \(\mathcal{G}_r\) from $G$ so that all active agents are always at a distance of at least two from \(v\). The construction of $\mathcal{G}_r$ based on the number of holes is as follows. 

\medskip
\noindent\textbf{Case 1 (At least two holes at the beginning of round \(r\)):} Let $w_1$, $w_2$, \ldots, $w_n$ be nodes at the beginning of round $r$. Assume that \(w_{n-1}\) and \(w_n\) are holes. Without loss of generality, let $w_n=v$. The adversary forms a clique from nodes $w_1$, $w_2$, \ldots, $w_{n-1}$, and attach node $w_n$ as a pendant node to node $w_{n-1}$. In this case, the adversary does not deactivate any agent. In this case, all active agents are at a distance of at least two from \(w_n\).

\medskip
\noindent\textbf{Case 2 (Exactly one hole at the beginning of round \(r\)):} Let $w_1$, $w_2$, \ldots, $w_n$ be nodes at the beginning of round $r$. Assume that \(w_{n-1}\) and \(w_n\) are holes. Without loss of generality, let $w_n=v$. In this case, all agents are at $n-1$ nodes. Since $(p+1)(n-2)$ agents are present, there are at least two nodes which have $\leq p$ agents. Without loss of generality, let nodes $w_{n-2}$ and $w_{n-1}$ have at most $p$ agents. For $r\geq 1$, let $w \in \{w_{n-2}, w_{n}\}$ such that every agent was active at round $r-1$. If $r=0$, then let $w=w_{n-2}$.
The adversary forms a clique from nodes $w_1$, $w_2$, \ldots, $w_{n-1}$, and attach node $w_n$ as a pendant node to node $w_{n-2}$. In this case, the deactivates all agents at node $w_{n-2}$. In this case, all active agents are at a distance of at least two from \(w_n\).

\medskip
We now show by induction that \(v\) remains a hole for all \(r\ge 0\).

\smallskip
\noindent\emph{Base case \(r=0\).}  
In both cases above, no active agent is adjacent to \(v\). Thus \(v\) is still empty at the end of round \(0\).

\smallskip
\noindent\emph{Induction step.}  
Assume \(v\) is a hole at the end of round \(r\). At the start of round \(r+1\), if there is another hole, we apply Case~1; otherwise, we apply Case~2. In both constructions, every active agent remains at a distance of at least two from \(v\), so \(v\) remains a hole at the end of round \(r+1\).

\medskip
\noindent\textbf{Fairness.}
Whenever the adversary deactivates agents at some node, these agents are reactivated in the following round. Thus, every agent is active infinitely often, and the schedule is fair.

\medskip
Since the proof does not rely on any limitations, it remains valid even if nodes have unbounded storage and unique identifiers, and agents have unbounded memory, full visibility, global communication and complete parameter knowledge. This completes the proof.
\end{proof}

\begin{observation}\label{obs}
If the adversary deactivates at most $p$ agents per round, exploration is impossible to solve with
$k \le (n-2)(p+1)$ agents. Equivalently, for a given $k$, exploration is impossible if
$p \ge \left\lceil \frac{k}{n-2} \right\rceil - 1$.
Hence, exploration with $k$ agents is solvable only when
$p \le \left\lceil \frac{k}{n-2} \right\rceil - 2$.    
\end{observation}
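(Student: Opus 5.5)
The plan is to reduce the observation to Theorem~\ref{thm:imp} with a monotonicity argument in $k$, and then to restate the resulting bound in terms of $p$.

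\textbf{Step 1: extend to $k<(n-2)(p+1)$ agents.} I will rerun the adversary of Theorem~\ref{thm:imp} with fewer agents. If there are at least two holes, Case~1 applies unchanged. If there is exactly one hole, the $k$ agents sit on $n-1$ nodes. Suppose at most one node had $\le p$ agents. Then at least $n-2$ nodes would each hold at least $p+1$ agents, so $k\ge (n-2)(p+1)+1$, a contradiction. Hence at least two nodes have $\le p$ agents. The adversary attaches the hole as a pendant to one of them and deactivates its (at most $p$) agents; choosing the node whose agents were not deactivated in the previous round keeps the schedule fair. The induction and fairness arguments of Theorem~\ref{thm:imp} then go through verbatim, so exploration is impossible for every $k\le (n-2)(p+1)$. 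Alternatively, one can note that the adversary could simply never use the extra agents' positions, but rerunning the construction directly is cleaner.

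\textbf{Step 2: convert to a condition on $p$.} Since $k$ is an integer, $k\le (n-2)(p+1)$ is equivalent to $\frac{k}{n-2}\le p+1$. This in turn is equivalent to $\left\lceil \frac{k}{n-2}\right\rceil\le p+1$, because $p+1$ is an integer and $\lceil x\rceil\le m \iff x\le m$ for integer $m$. This gives exactly $p\ge \left\lceil \frac{k}{n-2}\right\rceil-1$.

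\textbf{Step 3: contrapositive.} For solvability we need $p$ to fail this condition. Since $p$ is an integer, this means $p\le \left\lceil \frac{k}{n-2}\right\rceil-2$.

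\textbf{Main obstacle.} The only nonroutine point is Step~1, specifically the pigeonhole count for fewer agents. I expect it to be immediate from the count above.
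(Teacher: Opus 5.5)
Your proposal is correct and takes the paper's route: the paper states this observation as an immediate corollary of Theorem~\ref{thm:imp}, and you rerun that adversary. Your count uses the fact that all $n-1$ non-hole nodes are occupied, so it correctly yields two nodes with at most $p$ agents whenever $k\le (n-2)(p+1)$, and choosing the one that is not the node deactivated in the previous round makes fairness explicit. Your algebra in Steps~2--3 is right, since $n-2>0$ and $p$ is an integer. Drop your side remark about the adversary ``never using the extra agents' positions'': as phrased it is not an argument, because with fewer agents there are no extra agents, and your proof does not need it.
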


By Theorem~\ref{thm:imp}, at least $(p+1)(n-2)+1$ agents are required when the adversary can deactivate at most $p$ agents per round. We now study the necessary assumptions with $(p+1)(n-2)+1$ agents. The proof of the following theorem is non-trivial, lengthy, and relies on extensive case analysis.

\begin{theorem}\label{thm:imp-1-hop}
($n\geq 9, p\geq 1$) If the adversary deactivates at most $p$ agents per round, exploration is impossible with $(n-2)(p+1)+1$ agents in 1-interval connected graphs with $1$-hop visibility and f-2-f communication, even if nodes have unique identifiers and unbounded storage, and agents possess unbounded memory and complete parameter knowledge. 
\end{theorem}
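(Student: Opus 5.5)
We follow the strategy sketched in Section~\ref{sec:tech}: the adversary keeps the agents from ever reaching the configuration $\mathcal{C}^*$, and as long as $\mathcal{C}^*$ is not reached it blocks exploration exactly as in the proof of Theorem~\ref{thm:imp}. With $k=(p+1)(n-2)+1$ agents and a hole $v_h$, the agents occupy at most $n-1$ nodes. If the configuration is not $\mathcal{C}^*$, then there exist two nodes each holding at most $p$ agents, or there is a second hole. This uses the pigeonhole count: if only one occupied node has at most $p$ agents, then the other $n-2$ nodes hold at least $(p+1)(n-2)$ agents in total, which forces $\mathcal{C}^*$. In that case the adversary hangs $v_h$ as a pendant on such a node and deactivates its agents, alternating between the two candidate nodes for fairness as in Theorem~\ref{thm:imp}. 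So the whole proof reduces to one invariant: at the start of every round the configuration is not $\mathcal{C}^*$ and some fixed node $v_n$ is unvisited.

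We maintain the invariant by induction on rounds. The adversary keeps a Hamiltonian path $P=v_1\sim v_2\sim\cdots\sim v_n$ in which $v_n$ is the unvisited hole, and the nodes $v_1,\dots,v_{n-1}$ are sorted so that $\alpha(v_i)\ge\alpha(v_{i+1})$. Here $v_n$ is pendant at $v_{n-1}$, a node with fewest agents, and the agents at $v_{n-1}$ are deactivated when they number at most $p$. Given the algorithm, the adversary simulates round $r$ on $P$ under its chosen activation set. Since communication is f-2-f and visibility is 1-hop, each agent's decision depends only on its own memory and the labelled closed neighbourhood of its node. If the simulated outcome is not $\mathcal{C}^*$ and leaves $v_n$ empty, the adversary plays $P$. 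Otherwise it perturbs $P$ by deleting one edge $v_iv_{i+1}$ and adding another edge, for example $v_{i-1}v_{i+1}$ or $v_iv_{i+2}$, so that the result is still connected; if needed it also relabels ports. This changes the local views of only $O(1)$ nodes, namely the endpoints of the two edges. Agents elsewhere act exactly as simulated, and the adversary chooses the edit so that the altered moves break $\mathcal{C}^*$ or leave $v_n$ unreached. For example, it can redirect the single agent that was about to fill the last deficient node, or make a mover from a $(p+1)$-node go to a different neighbour. After the round it re-sorts the nodes and rebuilds a fresh path for the next round, which dynamic port labelling allows.

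The case analysis is organised by where the critical transition into $\mathcal{C}^*$ happens. Reaching $\mathcal{C}^*$ requires, in the final round, some flow of agents into nodes that end with exactly $p+1$ agents and exactly one node ending with a single agent. We split by whether the previous configuration had a second hole or a node with fewer than $p+1$ agents, and by whether the fix-up moves are adjacent to $v_{n-1}$, $v_n$, or interior to $P$. For each case we exhibit an edit in which a flow-in that was needed is lost or doubled, by exploiting a node that sees a different degree or different neighbour counts. Because agents cannot distinguish a modified neighbourhood from a genuinely different one, an algorithm cannot hedge; but its response to the modified view is arbitrary. So we need enough freedom to choose among several candidate edits. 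We also need slack away from the constantly many touched nodes, so that at least one edit yields a non-$\mathcal{C}^*$ outcome. This is where $n\ge 9$ enters: it leaves room for the edits and keeps the ends of $P$ disjoint from the affected region.

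The main obstacle is exactly this step: showing that for every possible algorithmic reaction to the perturbation, some choice among a bounded family of one-edge edits avoids $\mathcal{C}^*$ and keeps $v_n$ unvisited. We would attempt it by a counting argument on the totals. $\mathcal{C}^*$ is rigid, since it needs exact counts at all $n-1$ nodes, while two different edits affect disjoint sets of nodes. Hence the outcomes of two disjoint edits cannot both be $\mathcal{C}^*$ unless the agents at the unaffected nodes behave identically in both, and then the affected counts must match in incompatible ways. The remaining steps follow the earlier arguments: fairness via alternation and reactivation, and the pendant argument of Theorem~\ref{thm:imp}.
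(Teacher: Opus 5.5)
\textbf{There is a genuine gap: the step you call the main obstacle is the whole theorem, it is not proved, and the route you sketch for it fails.}

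The parts you do give are sound and follow the paper. The pigeonhole step is correct: a non-$\mathcal{C}^*$ configuration with one hole has two nodes holding at most $p$ agents. So is the architecture: a sorted path with $v_n$ pendant on the deactivated tail node, simulating the round, replacing $P$ by a one-edge-delete/one-edge-add variant when the simulation ends in $\mathcal{C}^*$, and using $n\ge 9$ to keep the edit sites apart.

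Your counting claim is false as stated. Two edits touching disjoint node sets can both end in $\mathcal{C}^*$. Take $p\ge 2$, with $u_1,\dots,u_{n-3}$ holding $p+1$ agents each, $u_{n-2}$ holding $p$, and $u_{n-1}$ holding $2$ inactive agents. Suppose that on $P$ the node $u_{n-2}$ sends $p-1$ agents to $u_{n-1}$ and every $(p+1)$-node that sees no hole stays put; the simulated round then ends in $\mathcal{C}^*$. Any edit placed in the quiet middle leaves $u_{n-2}$'s view unchanged, and the algorithm can answer the changed middle views by again staying put. Every such edit therefore reproduces the same $\mathcal{C}^*$.

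Disjointness and the rigidity of $\mathcal{C}^*$ give you nothing on their own. The edit has to cut a flow that the transition into $\mathcal{C}^*$ actually needs. It must also keep $v_n$ pendant on a node whose agents are all inactive; edits such as $v_iv_{i+2}$ near the tail can break this.

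The missing idea, as the paper uses it, is a case split on $Y=S_{\mathcal{C}_r}(u_{n-1})$, with a specific edit per case and a quantization argument.

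\emph{Case $2\le Y\le p$.}
\begin{itemize}
\item All agents at $u_{n-1}$ are inactive, so reaching $\mathcal{C}^*$ forces $u_{n-2}$ to send exactly $\alpha_1=p+1-Y\in[1,p-1]$ agents to $u_{n-1}$.
\item The edit deletes $u_{n-2}u_{n-1}$ and adds $u_1u_{n-1}$. Now $\{u_{n-2},u_{n-3}\}$ meets the rest of the graph only through $u_{n-4}$.
\item The views at $u_{n-4}$ and $u_{n-5}$ are unchanged; this is where $n\ge 9$ is used. Hence their flows are exactly as in the simulation.
\item $u_n$ remains the unique hole in both outcomes, so every other node ends with $1$ or $p+1$ in both. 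Each end count therefore changes by an element of $\{0,\pm p\}$.
\item The change at $u_{n-4}$ is $\beta_2'-\beta_2$. The change of the total over $\{u_{n-2},u_{n-3}\}$ is $\alpha_1-(\beta_2'-\beta_2)$.
\item Both changes are multiples of $p$, so $\alpha_1\equiv 0\pmod p$, which contradicts $\alpha_1\in[1,p-1]$.
\end{itemize}

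\emph{Cases $Y=1$ and $Y=0$.} The congruence alone is not enough here; for $Y=1$ the required inflow is $0$ or $p$. The paper instead deletes $u_{n-3}u_{n-2}$ and attaches $u_{n-1}$ (for $Y=1$), resp.\ $u_{n-2}$ (for $Y=0$), to $u_1$. It then also uses that $\mathcal{C}^*$ has exactly one singleton node.

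Without an argument of this kind, identifying the critical flow and showing the resulting discrepancy cannot be absorbed by the few nodes whose views change, your invariant, and hence the theorem, is not established.
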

\begin{proof}
Let $G$ be a clique of size $n$. Consider any initial configuration with at least one hole, and agents are not in a $\mathcal{C}^*$ configuration. Label the nodes $v_1,\dots,v_n$ such that $\alpha(v_i)\ge \alpha(v_{i+1})$ for all $i\in[1,n-1]$, where $\alpha(v_j)$ denotes the initial number of agents at $v_j$. As discussed in Section \ref{sec:tech}, if agents are in $\mathcal{C}^*$ configuration, agents can explore $G$ using 1-hop visibility within finite but unbounded time. Therefore, we construct $\mathcal{G}_r$ so that the agents never reach configuration $\mathcal{C}^*$ at the end of round $r$, which leads exploration to fail. We use two notations for nodes. Let $H_r$ be a graph at round $r$. For any node $v \in H_r$, let $S_{H_r}(v)$ and $E_{H_r}(v)$ denote the number of agents at the beginning and end of round $r$ at $v$, respectively.
\begin{figure}
    \centering
    \includegraphics[width=0.75\linewidth]{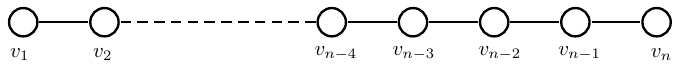}
    \caption{The construction of $\mathcal{C}_0$.}
    \label{fig:1-hop_1}
\end{figure}

\vspace{0.2cm}
\noindent \textbf{Graph $\mathcal{C}_r$ at round $r\geq 0$:}
    \begin{itemize}
         \item $\bm{r=0}:$  Initial configuration is not $\mathcal{C}^*$, and there is at least one hole. Without loss of generality, let $v_n$ be a hole at the beginning of round $r=0$. Consider a path $P_0:=v_1\sim v_2\sim v_3\sim \ldots v_{n-1}\sim v_n$.  In this case, $\mathcal{C}_0=P_0$ (refer to Fig. \ref{fig:1-hop_1}). The adversary deactivates the agents as follows.
         \begin{itemize}
             \item If there are at least one agent at node $v_{n-1}$, the adversary deactivates all agents at node $v_{n-1}$, and all other agents are active. 
             \item If there is no agent at node $v_{n-1}$, then it deactivates one agent at node $v_{n-2}$.
         \end{itemize}
         
        \item $\bm{r\geq1}:$ At the end of round $r-1$, let $w_1$, $w_2$, \ldots,$w_{n-1}$, $w_n(=v_n)$ be nodes. In $\mathcal{G}_{r-1}$, agents do not achieve $\mathcal{C}^*$. Note that in $\mathcal{G}_{r-1}$, the value of $deg_r(w_n)$ is 1 (we observe this fact after completing the construction of $\mathcal{G}_r$). We denote the neighbour of $w_n$ by $w_{n-1}$ in $\mathcal{G}_{r-1}$. There are two possible cases: 
\begin{figure}
    \centering
    \includegraphics[width=0.75\linewidth]{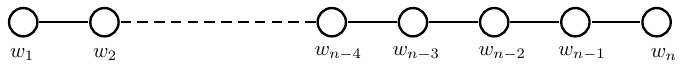}
    \caption{The construction of $\mathcal{C}_r$ when $E_{\mathcal{G}_{r-1}}(w_{n-1})=0$.}
    \label{fig:1-hop_2}
\end{figure}

        \vspace{0.2cm}
        \noindent \textbf{Case 1} $(\bm{E_{\mathcal{G}_{r-1}}(w_{n-1})=0)}$: Without loss of generality, let $w_1$, $w_2$, \ldots,$w_{n-2}$ be nodes such that $E_{\mathcal{G}_{r-1}}(w_i)\geq E_{\mathcal{G}_r}(w_{i+1})$ for every $i\in [1,n-3]$. Consider a path, $P_r:=w_1\sim w_2\sim w_3\sim \ldots w_{n-1}\sim w_n$. In this case, $\mathcal{C}_r=P_r$ (refer to Fig. \ref{fig:1-hop_2}). In this case, the adversary deactivates agents as follows.
        \begin{itemize}
            \item If $E_{\mathcal{G}_r}(w_{n-2})\leq p$, then it does not deactivate any agent.
            \item If $E_{\mathcal{G}_r}(w_{n-2})\geq p+1$, then it deactivates one agent at node $w_{n-2}$ which was not deactivated in round $r-1$. Such agent always exists as $E_{\mathcal{G}_r}(w_{n-2})\geq p+1$.
        \end{itemize}

\begin{figure}
    \centering
    \includegraphics[width=0.75\linewidth]{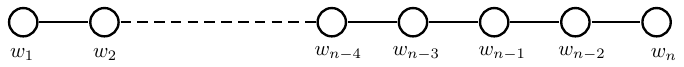}
    \caption{The construction of $\mathcal{C}_r$ when $E_{\mathcal{G}_{r-1}}(w_{n-1})\neq 0$.}
    \label{fig:1-hop_3}
\end{figure}
         \vspace{0.2cm}
        \noindent \textbf{Case 2} $\bm{(E_{\mathcal{G}_{r-1}}(w_{n-1})\neq 0)}$: Since agents do not achieve $\mathcal{C}^*$ at the end of round $r-1$, there is at least one node from set $\{w_1,w_2, \ldots, w_{n-2}\}$ which has $\leq p$ agents. Without loss of generality, let $E_{\mathcal{G}_{r-1}}(w_{n-2})\leq p$. Consider $w_1$, $w_2$, \ldots,$w_{n-3}$, $w_{n-1}$ be nodes such that $E_{\mathcal{G}_{r-1}}(w_i)\geq E_{\mathcal{G}_r}(w_{i+1})$ for every $i\in [1,n-4]$ and $E_{\mathcal{G}_{r-1}}(w_{n-3})\geq E_{\mathcal{G}_{r-1}}(w_{n-1})$. Consider a path, $P_r:=w_1\sim w_2\sim w_3\sim \ldots w_{n-3}\sim w_{n-1}\sim w_{n-2}\sim w_n$. In this case, $\mathcal{C}_r=P_r$ (refer to Fig. \ref{fig:1-hop_3}). The adversary deactivates all the agents at node $w_{n-1}$.

        \end{itemize}
        As per the aforementioned construction of $\mathcal{C}_r$ for round $r\ge 0$, the graph is a path $P = u_1 \sim u_2 \sim \ldots \sim u_{n-2} \sim u_{n-1} \sim u_n(=v_n)$. The values $S_{\mathcal{C}_r}(u_i)$ do not increase as we move from $u_i$ to $u_{i+1}$ for every $i\in[1,n-3]$, and the agents are not in configuration $\mathcal{C}^*$. 

        Since the adversary knows the agents' algorithm and the structure of $\mathcal{C}_r$, it can pre-compute the agents' movement and determine whether they would reach configuration $\mathcal{C}^*$ when $\mathcal{G}_r=\mathcal{C}_r$ at the end of round $r$. If they do not reach $\mathcal{C}^*$, the adversary simply forms $\mathcal{C}_r$ at the beginning of round $r$. If they would reach $\mathcal{C}^*$, then the adversary instead forms $\mathcal{C}_r'$ at the beginning of round $r$. We show that whenever the agents would reach $\mathcal{C}^*$ under $\mathcal{C}_r$, they fail to reach $\mathcal{C}^*$ under $\mathcal{C}_r'$ at the end of the same round. The construction of $\mathcal{C}_r'$ depends on the value of $S_{\mathcal{C}_r}(u_{\,n-1})$. The reason why the agents fail to reach configuration $\mathcal{C}^*$ when $\mathcal{G}_r=\mathcal{C}_r'$ is described below. 

        Let $S_{\mathcal{C}_r}(u_{n-1})=Y$, $S_{\mathcal{C}_r}(u_{n-2})=X_1$, $S_{\mathcal{C}_r}(u_{n-3})=X_2$, and $S_{\mathcal{C}_r}(u_{n-4})=X_3$.  
Based on the pre-computation of agent movements in $\mathcal{C}_r$ is as follows (refer to Fig. \ref{fig:1-hop_4}):
\begin{figure}
    \centering
    \includegraphics[width=0.75\linewidth]{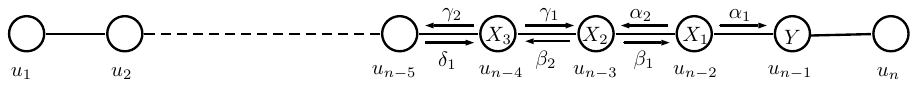}
    \caption{The pre-computation of agents in $\mathcal{C}_r$.}
    \label{fig:1-hop_4}
\end{figure}
\begin{itemize}
    \item From node $u_{n-2}$: $\alpha_1$ agent(s) move to $u_{n-1}$, and $\alpha_2$ agent(s) move to $u_{n-3}$.
    \item From node $u_{n-3}$: $\beta_1$ agent(s) move to $u_{n-2}$, and $\beta_2$ agent(s) move to $u_{n-4}$.
    \item From node $u_{n-4}$: $\gamma_1$ agent(s) move to $u_{n-3}$, and $\gamma_2$ agent(s) move to $u_{n-5}$.
    \item From node $u_{n-5}$: $\delta_2$ agent(s) move to $u_{n-4}$.
\end{itemize}

\vspace{0.1cm}
\begin{figure}
    \centering
    \includegraphics[width=0.75\linewidth]{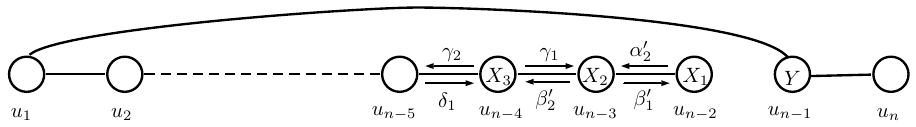}
    \caption{The construction of $\mathcal{C}_r'$ when $2 \le Y \le p$.}
    \label{fig:1-hop_5}
\end{figure}
\noindent \underline{Case 1 $(2 \le Y \le p)$}:  
In this case we have $\alpha_1 = p+1-Y$, because all agents at node $u_{n-1}$ are deactivated in $\mathcal{C}_r$. Thus, the only way to reach configuration $\mathcal{C}^*$ at the end of round $r$ is to increase the number of agents at $u_{n-1}$ to exactly $p+1$. Due to $2\leq Y\leq p$, the following inequality holds.
\begin{align}
    1\leq \alpha_1\leq p-1\label{eq:11}
\end{align}
As per pre-computation, the following equations should hold at the end of round $r$ in $\mathcal{C}_r$.
\begin{align}
    X_1-(\alpha_1+\alpha_2)+\beta_1&= 1 \text{ or } p+1\\
    X_2-(\beta_1+\beta_2)+(\alpha_2+\gamma_1)&= 1 \text{ or } p+1
\end{align}
The adversary constructs $\mathcal{C}_r'$ as follows. In path $P = u_1 \sim u_2 \sim \ldots \sim u_{n-2} \sim u_{n-1} \sim u_n(=v_n)$, the adversary removes edge between node $u_{n-2}$ and $u_{n-1}$, and add an edge between node $u_1$ and $u_{n-1}$. In configuration $\mathcal{C}_r'$, the adversary deactivates exactly the same agents that were deactivated in $\mathcal{C}_r$. This new graph is $\mathcal{C}_r'$, and constructed as $\mathcal{G}_r$ at the beginning of round $r$. In this graph, the 1-hop view of agents at node $u_1$, $u_2$, $u_{n-3}$, $u_{n-2}$, $u_{n-1}$ is changed, and the remaining agents at nodes $u_i$, $i\in [3, n-4]$ have the same 1-hop view as in $\mathcal{C}_r$ as we consider $n\geq 9$. Since agents are equipped with 1-hop visibility and f-2-f communication, the movement of agents at nodes $u_1$, $u_2$, $u_{n-3}$, $u_{n-2}$, $u_{n-1}$ in $\mathcal{C}_r'$ may not be the same as $\mathcal{C}_r$, and other remaining agent's movement in $\mathcal{C}_r'$ remains the same as in $\mathcal{C}_r$. Therefore, let $\alpha_2'$ agent(s) move from node $u_{n-2}$ to $u_{n-3}$, and from node $u_{n-3}$, let $\beta_1'$ agent(s) move to $u_{n-2}$ and $\beta_2'$ agent(s) move to $u_{n-4}$. We can see this in Fig. \ref{fig:1-hop_5}.

\begin{itemize}
    \item \underline{Case 1.1}: As per pre-computation in $\mathcal{C}_r$, the following equations hold.
    \begin{align}
    X_1-(\alpha_1+\alpha_2)+\beta_1&= 1 \label{eq:1}\\
    X_2-(\beta_1+\beta_2)+(\alpha_2+\gamma_1)&= p+1\label{eq:2}
\end{align}
After taking the sum of Eq. \ref{eq:1} and Eq. \ref{eq:2}, we get the following equation. 
\begin{align}
    X_1+X_2-\beta_2-\alpha_1+\gamma_1=p+2 \label{eq:3}
\end{align}
The agent's movement in $\mathcal{C}_r'$ leads to the following equations.
    \begin{align}
    X_1-\alpha_2'+\beta_1'&= 1 \text{ or } p+1\\
    X_2-(\beta_1'+\beta_2')+(\alpha_2'+\gamma_1)&= 1 \text{ or }p+1
\end{align}
We show in each case that movement of agents in $\mathcal{C}_r'$ does not lead to $\mathcal{C}^*$. 
\begin{itemize}
    \item \underline{Case 1.1.1}: Consider the following equations hold. 
        \begin{align}
            X_1-\alpha_2'+\beta_1'&= p+1 \label{eq:4}\\
            X_2-(\beta_1'+\beta_2')+(\alpha_2'+\gamma_1)&= 1\label{eq:5}
        \end{align}
        Since in $\mathcal{C}_r$ we have $E_{\mathcal{C}_r}(u_{n-4})=p+1$, and in $\mathcal{C}_r'$ we have $E_{\mathcal{C}_r'}(u_{n-2})=1$ and $E_{\mathcal{C}_r'}(u_{n-3})=p+1$, the only way to reach $\mathcal{C}^*$ in $\mathcal{C}_r'$ that $E_{\mathcal{C}_r'}(u_{n-4})$ also becomes $p+1$ is to keep $\beta_2=\beta_2'$ as $n\geq 8$. Hence, after taking the sum of Eq. \ref{eq:4} and Eq. \ref{eq:5}, we get the following equation. 
        \begin{align}
            X_1+X_2-\beta_2+\gamma_1=p+2 \label{eq:6}
        \end{align}
        From Eq. \ref{eq:3} and Eq. \ref{eq:6}, we have the following.
        \begin{align}
            X_1+X_2-\beta_2-\alpha_1+\gamma_1=X_1+X_2-\beta_2+\gamma_1\implies \alpha_1=0 \label{eq:7}
        \end{align}
        Eq. \ref{eq:11} and Eq. \ref{eq:7} lead to the contradiction. 
    \item \underline{Case 1.1.2}: Consider the following equations hold. 
        \begin{align}
            X_1-\alpha_2'+\beta_1'&= p+1 \label{eq:8}\\
            X_2-(\beta_1'+\beta_2')+(\alpha_2'+\gamma_1)&= 1\label{eq:9}
        \end{align}
        Since in $\mathcal{C}_r$ we have $E_{\mathcal{C}_r}(u_{n-4})=p+1$, and in $\mathcal{C}_r'$ we have $E_{\mathcal{C}_r'}(u_{n-2})=1$ and $E_{\mathcal{C}_r'}(u_{n-3})=p+1$, the only way to reach $\mathcal{C}^*$ in $\mathcal{C}_r'$ that $E_{\mathcal{C}_r'}(u_{n-4})$ also becomes $p+1$ is to keep $\beta_2=\beta_2'$ as $n\geq 8$. Hence, after taking the sum of Eq. \ref{eq:8} and Eq. \ref{eq:9}, we get the following equation. 
        \begin{align}
            X_1+X_2-\beta_2+\gamma_1=p+2 \label{eq:10}
        \end{align}
        From Eq. \ref{eq:3} and Eq. \ref{eq:10}, we have the following.
        \begin{align}
            X_1+X_2-\beta_2-\alpha_1+\gamma_1=X_1+X_2-\beta_2+\gamma_1\implies \alpha_1=0 \label{eq:12}
        \end{align}
        Eq. \ref{eq:11} and Eq. \ref{eq:12} lead to the contradiction. 
        \item \underline{Case 1.1.3}: Consider the following equations hold. 
        \begin{align}
            X_1-\alpha_2'+\beta_1'&= p+1 \label{eq:13}\\
            X_2-(\beta_1'+\beta_2')+(\alpha_2'+\gamma_1)&= p+1\label{eq:14}
        \end{align}
        Since in $\mathcal{C}_r$ we have $E_{\mathcal{C}_r}(u_{n-4})=p+1$, and in $\mathcal{C}_r'$ we have $E_{\mathcal{C}_r'}(u_{n-2})=p+1$ and $E_{\mathcal{C}_r'}(u_{n-3})=p+1$, the value of $E_{\mathcal{C}_r'}(u_{n-4})$ is either $p+1$ or 1. The value of $E_{\mathcal{C}_r'}(u_{n-4})=1$ if $\beta_2'=\beta_2-p$, and the value of $E_{\mathcal{C}_r'}(u_{n-4})=p+1$ if $\beta_2'=\beta_2$ as the view of node $u_{n-5}$ and $u_{n-4}$ are the same in $\mathcal{C}_r'$. Hence, after taking the sum of Eq. \ref{eq:13} and Eq. \ref{eq:14}, we get the following equation. 
        \begin{align}
            X_1+X_2-\beta_2'+\gamma_1=2p+2 \label{eq:15}
        \end{align}
        From Eq. \ref{eq:3} and Eq. \ref{eq:15}, we have the following.
        \begin{align*}
            p+X_1+X_2-\beta_2-\alpha_1+\gamma_1=X_1+X_2-\beta_2'+\gamma_1
            \implies \alpha_1=p-\beta_2+\beta_2'  
        \end{align*}
        \begin{align}
            \implies \alpha_1=0 \text{ if } \beta_2'=\beta_2-p \text{ and } \alpha_1=p \text{ if } \beta_2'=\beta_2 \label{eq:16}
        \end{align}
        Eq. \ref{eq:11} and Eq. \ref{eq:16} lead to the contradiction. 

        \item \underline{Case 1.1.4}: Consider the following equations hold. 
        \begin{align}
            X_1-\alpha_2'+\beta_1'&= 1 \\
            X_2-(\beta_1'+\beta_2')+(\alpha_2'+\gamma_1)&= 1
        \end{align}
        This case shows that agents are not in $\mathcal{C}^*$ form at the end of round $ r$. As in $\mathcal{C}^*$, there is exactly one node with one agent.
\end{itemize}
\item \underline{Case 1.2}: As per pre-computation in $\mathcal{C}_r$, the following equations hold.
    \begin{align*}
    X_1-(\alpha_1+\alpha_2)+\beta_1&= p+1\\
    X_2-(\beta_1+\beta_2)+(\alpha_2+\gamma_1)&= 1
\end{align*}
In this case, the analysis is analogous to Case 1.1.
\item \underline{Case 1.3}: As per pre-computation in $\mathcal{C}_r$, the following equations hold.
    \begin{align}
    X_1-(\alpha_1+\alpha_2)+\beta_1&= p+1\label{eq:17}\\
    X_2-(\beta_1+\beta_2)+(\alpha_2+\gamma_1)&= p+1\label{eq:18}
\end{align}
After taking the sum of Eq. \ref{eq:17} and Eq. \ref{eq:18}, we get the following equation. 
\begin{align}
    X_1+X_2-\beta_2-\alpha_1+\gamma_1=2p+2 \label{eq:19}
\end{align}
In this sub-case, the following equality holds at node $u_{n-4}$ in $\mathcal{C}_r$.
\begin{align}
    X_3-(\gamma_1+\gamma_2)+\delta_2+\beta_2&=1 \text{ or } p+1
\end{align}
The agent's movement in $\mathcal{C}_r'$ leads to the following equations.
    \begin{align}
    X_1-\alpha_2'+\beta_1'&= 1 \text{ or } p+1\\
    X_2-(\beta_1'+\beta_2')+(\alpha_2'+\gamma_1)&= 1 \text{ or }p+1
\end{align}

\begin{itemize}
    \item \underline{Case 1.3.1}: Consider the following holds.
\begin{align}
    X_3-(\gamma_1+\gamma_2)+\delta_2+\beta_2&=1 \label{eq:20}
\end{align}
In this case, the value of $\beta_2'$ is either $\beta_2$ or $\beta_2+p$ as $n\geq 8$; otherwise the number of agents at $u_{n-4}$ would be neither $p+1$ nor $1$ in $\mathcal{C}_r'$. 
\begin{itemize}
    \item \underline{Case 1.3.1.1}: Consider the following equations hold. 
        \begin{align}
            X_1-\alpha_2'+\beta_1'&= 1 \label{eq:21}\\
            X_2-(\beta_1'+\beta_2')+(\alpha_2'+\gamma_1)&= p+1\label{eq:22}
        \end{align}
        If the value of $\beta_2'=\beta_2$, there will be two nodes (node $u_{n-4}$ and node $u_{n-2}$) in $\mathcal{C}_r'$ which have only one agent, which is not a $\mathcal{C}^*$. If the value of $\beta_2'=\beta_2+p$, then due to the following reasoning, we have a contradiction. After taking the sum of Eq. \ref{eq:21} and Eq. \ref{eq:22}, we get the following equation. 
        \begin{align}
             X_1+X_2-\beta_2'+\gamma_1=p+2 \implies X_1+X_2-\beta_2+\gamma_1=2p+2 \text{ as } \beta_2'=\beta_2+p\label{eq:23}
        \end{align}
        Due to Eq. \ref{eq:19} and Eq. \ref{eq:23}, we have the following equation.
        \begin{align}
            X_1+X_2-\beta_2-\alpha_1+\gamma_1= X_1+X_2-\beta_2+\gamma_1\implies \alpha_1=0\label{eq:231}
        \end{align}
        Due to Eq. \ref{eq:11} and Eq. \ref{eq:231}, we have a contradiction. 
        \item \underline{Case 1.3.1.2}: Consider the following equations hold. 
        \begin{align}
            X_1-\alpha_2'+\beta_1'&= p+1 \label{eq:24}\\
            X_2-(\beta_1'+\beta_2')+(\alpha_2'+\gamma_1)&= 1\label{eq:25}
        \end{align}
        If the value of $\beta_2'=\beta_2$, there will be two nodes (node $u_{n-4}$ and node $u_{n-3}$) in $\mathcal{C}_r'$ which have only one agent, which is not a $\mathcal{C}^*$. If the value of $\beta_2'=\beta_2+p$, then due to the following reasoning, we have a contradiction. After taking the sum of Eq. \ref{eq:24} and Eq. \ref{eq:25}, we get the following equation. 
        \begin{align}
             X_1+X_2-\beta_2'+\gamma_1=p+2 \implies X_1+X_2-\beta_2+\gamma_1=2p+2 \text{ as } \beta_2'=\beta_2+p\label{eq:26}
        \end{align}
        Due to Eq. \ref{eq:19} and Eq. \ref{eq:26}, we have the following equation.
        \begin{align}
            X_1+X_2-\beta_2-\alpha_1+\gamma_1=X_1+X_2-\beta_2+\gamma_1\implies \alpha_1=0\label{eq:27}
        \end{align}
        Due to Eq. \ref{eq:11} and Eq. \ref{eq:27}, we have a contradiction. 
         \item \underline{Case 1.3.1.3}: Consider the following equations hold. 
        \begin{align}
            X_1-\alpha_2'+\beta_1'&= p+1 \label{eq:28}\\
            X_2-(\beta_1'+\beta_2')+(\alpha_2'+\gamma_1)&= p+1\label{eq:29}
        \end{align}
       After taking the sum of Eq. \ref{eq:28} and Eq. \ref{eq:29}, we get the following equation. 
        \begin{align}
             X_1+X_2-\beta_2'+\gamma_1=2p+2 \label{eq:30}
        \end{align}
        Due to Eq. \ref{eq:19} and Eq. \ref{eq:30}, we have the following equation.
        \begin{align}
            X_1+X_2-\beta_2-\alpha_1+\gamma_1=X_1+X_2-\beta_2'+\gamma_1\implies \alpha_1=\beta_2'-\beta_2\label{eq:31}
        \end{align}
        In Eq. \ref{eq:31}, if $\beta_2'=\beta_2$ (resp. $\beta_2'=\beta_2+p$), then $\alpha_1=0$ (resp. $\alpha_1=p$). This leads to a contradiction due to Eq. \ref{eq:11}.
        \item \underline{Case 1.3.1.4}: Consider the following equations hold. 
        \begin{align}
            X_1-\alpha_2'+\beta_1'&= 1\\
            X_2-(\beta_1'+\beta_2')+(\alpha_2'+\gamma_1)&= 1
        \end{align}
        This case shows that agents are not in $\mathcal{C}^*$ form at the end of round $ r$. As in $\mathcal{C}^*$, there is exactly one node with one agent.
\end{itemize}

\item \underline{Case 1.3.2}: If the following holds.
\begin{align}
    X_3-(\gamma_1+\gamma_2)+\delta_2+\beta_2&=p+1 \label{eq:32}
\end{align}
In this case, the value of $\beta_2'$ is either $\beta_2$ or $\beta_2-p$ as $n\geq 8$; otherwise the number of agents at $u_{n-4}$ would be neither $p+1$ nor $1$ in $\mathcal{C}_r'$.
\begin{itemize}
    \item \underline{Case 1.3.2.1}: Consider the following equations hold. 
        \begin{align}
            X_1-\alpha_2'+\beta_1'&= 1 \label{eq:33}\\
            X_2-(\beta_1'+\beta_2')+(\alpha_2'+\gamma_1)&= p+1\label{eq:34}
        \end{align}
        If the value of $\beta_2'=\beta_2-p$, there will be two nodes (node $u_{n-4}$ and node $u_{n-2}$) in $\mathcal{C}_r'$ which have only one agent, which is not a $\mathcal{C}^*$. If the value of $\beta_2'=\beta_2$, then due to the following reasoning, we have a contradiction. After taking the sum of Eq. \ref{eq:33} and Eq. \ref{eq:34}, we get the following equation. 
        \begin{align}
             X_1+X_2-\beta_2'+\gamma_1=p+2 \implies X_1+X_2-\beta_2+\gamma_1=p+2 \text{ as } \beta_2'=\beta_2\label{eq:35}
        \end{align}
        Due to Eq. \ref{eq:19} and Eq. \ref{eq:35}, we have the following equation.
        \begin{align}
            X_1+X_2-\beta_2-\alpha_1+\gamma_1=p+ X_1+X_2-\beta_2+\gamma_1\implies \alpha_1=-p\label{eq:36}
        \end{align}
        Due to Eq. \ref{eq:11} and Eq. \ref{eq:36}, we have a contradiction. 
        \item \underline{Case 1.3.2.2}: Consider the following equations hold. 
        \begin{align}
            X_1-\alpha_2'+\beta_1'&= p+1 \label{eq:37}\\
            X_2-(\beta_1'+\beta_2')+(\alpha_2'+\gamma_1)&= 1\label{eq:38}
        \end{align}
        If the value of $\beta_2'=\beta_2-p$, there will be two nodes (node $u_{n-4}$ and node $u_{n-3}$) in $\mathcal{C}_r'$ which have only one agent, which is not a $\mathcal{C}^*$. If the value of $\beta_2'=\beta_2$, then due to the following reasoning, we have a contradiction. After taking the sum of Eq. \ref{eq:37} and Eq. \ref{eq:38}, we get the following equation. 
        \begin{align}
             X_1+X_2-\beta_2'+\gamma_1=p+2 \implies X_1+X_2-\beta_2+\gamma_1=p+2 \text{ as } \beta_2'=\beta_2\label{eq:39}
        \end{align}
        Due to Eq. \ref{eq:19} and Eq. \ref{eq:39}, we have the following equation.
        \begin{align}
            X_1+X_2-\beta_2-\alpha_1+\gamma_1=p+ X_1+X_2-\beta_2+\gamma_1\implies \alpha_1=-p\label{eq:40}
        \end{align}
        Due to Eq. \ref{eq:11} and Eq. \ref{eq:40}, we have a contradiction. 
         \item \underline{Case 1.3.2.3}: Consider the following equations hold. 
        \begin{align}
            X_1-\alpha_2'+\beta_1'&= p+1 \label{eq:41}\\
            X_2-(\beta_1'+\beta_2')+(\alpha_2'+\gamma_1)&= p+1\label{eq:42}
        \end{align}
       After taking the sum of Eq. \ref{eq:41} and Eq. \ref{eq:42}, we get the following equation. 
        \begin{align}
             X_1+X_2-\beta_2'+\gamma_1=2p+2 \label{eq:43}
        \end{align}
        Due to Eq. \ref{eq:19} and Eq. \ref{eq:43}, we have the following equation.
        \begin{align}
            X_1+X_2-\beta_2-\alpha_1+\gamma_1=X_1+X_2-\beta_2'+\gamma_1\implies \alpha_1=\beta_2'-\beta_2\label{eq:44}
        \end{align}
        In Eq. \ref{eq:44}, if $\beta_2'=\beta_2$ (resp. $\beta_2'=\beta_2-p$), then $\alpha_1=0$ (resp. $\alpha_1=-p$). This leads to a contradiction due to Eq. \ref{eq:11}.

        \item  \underline{Case 1.3.2.4}: Consider the following equations hold. 
        \begin{align}
            X_1-\alpha_2'+\beta_1'&= 1\\
            X_2-(\beta_1'+\beta_2')+(\alpha_2'+\gamma_1)&= 1
        \end{align}
        This case shows that agents are not in $\mathcal{C}^*$ form at the end of round $ r$. As in $\mathcal{C}^*$, there is exactly one node with one agent.
\end{itemize}
\end{itemize}
\underline{Case 1.4}: As per pre-computation in $\mathcal{C}_r$, the following equations hold.
    \begin{align}
    X_1-(\alpha_1+\alpha_2)+\beta_1&= 1\\
    X_2-(\beta_1+\beta_2)+(\alpha_2+\gamma_1)&= 1
\end{align}
This case shows that agents are not in $\mathcal{C}^*$ form at the end of round $r$. As in $\mathcal{C}^*$, there is exactly one node with one agent.
\end{itemize}

\vspace{0.1cm}
\begin{figure}
    \centering
    \includegraphics[width=0.75\linewidth]{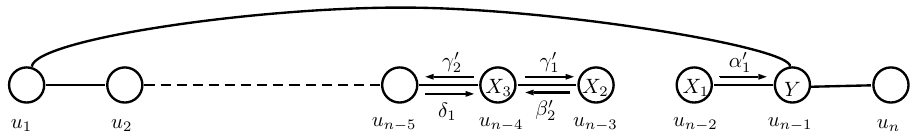}
    \caption{The construction of $\mathcal{C}_r'$ when $Y=1$.}
    \label{fig:1-hop_6}
\end{figure}
\noindent \underline{Case 2 $(Y =1)$}: In this case, $0\leq X_1\leq p$ as if $X_1\geq p+1$, every node $u_i$, $i\in [1,n-3]$ has at least $p+1$. Since agents are not in $\mathcal{C}^*$ configuration at the beginning of round $r$, the total number of agents becomes more than $(n-2)(p+1)+1$, which is not possible. Therefore, the value of $X_1$ belongs to $[0,p]$. The adversary constructs $\mathcal{C}_r'$ as follows. In path $P = u_1 \sim u_2 \sim \ldots \sim u_{n-2} \sim u_{n-1} \sim u_n(=v_n)$, the adversary removes edge between node $u_{n-3}$ and $u_{n-2}$, and add an edge between node $u_1$ and $u_{n-1}$. In configuration $\mathcal{C}_r'$, the adversary deactivates exactly the same agents that were deactivated in $\mathcal{C}_r$. This new graph is $\mathcal{C}_r'$, and constructed as $\mathcal{G}_r$ at the beginning of round $r$. In this graph, the 1-hop view of agents at node $u_1$, $u_2$, $u_{n-4}$, $u_{n-3}$, $u_{n-2}$, $u_{n-1}$ is changed, and the remaining agents at nodes $u_i$, $i\in [3, n-5]$ have the same 1-hop view as in $\mathcal{C}_r$. Since agents are equipped with 1-hop visibility and f-2-f communication, the movement of agents at nodes $u_1$, $u_2$, $u_{n-4}$, $u_{n-3}$, $u_{n-2}$, $u_{n-1}$ in $\mathcal{C}_r'$ may not be the same as $\mathcal{C}_r$, and other remaining agent's movement in $\mathcal{C}_r'$ remains the same as in $\mathcal{C}_r$. Therefore, let $\alpha_1'$ agent(s) move from node $u_{n-2}$ to $u_{n-1}$, from node $u_{n-3}$, let $\beta_2'$ agent(s) move to $u_{n-4}$, and from node $u_{n-4}$, let $\gamma_1'$ agent(s) move to $u_{n-3}$ and $\gamma_2'$ agent(s) move to $u_{n-5}$. We can see this in Fig. \ref{fig:1-hop_6}. 

If $X_1=0$, then no matter how agents move, node $u_{n-2}$ remains a hole as the agent at node $u_{n-1}$ is not active. Therefore, at the end of round $r$, nodes $u_{n-2}$ and $u_n$ remain the hole. The number of agents at node $u_{n-2}$ in $\mathcal{C}_r'$ has to become 1 at the end of round $r$ as node $u_{n-2}$ cannot have $p+1$ agents due to the fact agent at node $u_{n-1}$ is deactivated. Therefore, in $\mathcal{C}_r'$, the following equality should hold.
\begin{align}
    X_2-\beta_2'+\gamma_1'&= p+1\label{eq:49}\\
    X_3-(\gamma_1'+\gamma_2')+\delta_1+\beta_2'&=p+1\label{eq:50}
\end{align}
After taking the sum of Eq. \ref{eq:49} and Eq. \ref{eq:50}, we get the following equation. 
\begin{align}
    X_2+X_3-\gamma_2'+\delta_1=2p+2 \label{eq:51}
\end{align}

\begin{itemize}
    \item \underline{Case 2.1 $(X_1\in [1,p-1])$}: Since one agent is deactivated at node $u_{n-1}$ and $1\leq X_1\leq p-1$, therefore the number of agents at node $u_{n-1}$ in $\mathcal{C}_r$ cannot be $p+1$ at the end of round $r$. If agents achieve $\mathcal{C}^*$ at the end of round $r$ when $\mathcal{G}_r=\mathcal{C}_r$, then the value of $\alpha_1$ is 0, and the following equations hold at the end of round $r$ in $\mathcal{C}_r$ (refer to Figure \ref{fig:1-hop_4}).
\begin{align}
     X_1-\alpha_2+\beta_1&= p+1 \label{eq:45}\\
    X_2-(\beta_1+\beta_2)+(\alpha_2+\gamma_1)&= p+1\label{eq:46}\\
    X_3-(\gamma_1+\gamma_2)+\delta_1+\beta_2&=p+1\label{eq:47}
\end{align}
After taking the sum of Eq. \ref{eq:46} and Eq. \ref{eq:47}, we get the following equation. 
\begin{align}
    X_2+X_3-\gamma_2-\beta_1+\delta_1+\alpha_2=2p+2 \label{eq:48}
\end{align}

In this case, the value of $\gamma_2'=\gamma_2$ as in $\mathcal{C}_r$ (resp $\mathcal{C}_r'$), node $u_{n-1}$ (resp. $u_{n-2}$) becomes a node with 1 agent at the end of round $r$. Therefore, from Eq. \ref{eq:48} and Eq. \ref{eq:51}, we have $\beta_1=\alpha_2$ using $\gamma_2'=\gamma_2$. If we put $\beta_1=\alpha_2$ in Eq. \ref{eq:45}, we get $X_1=p+1$. This leads to a contradiction as $X_1\in [1,p-1]$ in this case. 
    \item \underline{Case 2.2 $(X_1=p)$}: In this case, the value of $\alpha_1=0$ or $p$ due to the following reason. Since agents achieve $\mathcal{C}^*$ at the end of round $r$ and one agent at node $u_{n-1}$ is deactivated, the $E_{\mathcal{C}_r}(u_{n-1})$ is either $1$ or $p+1$ which is possible when $\alpha_1=0$ and $\alpha_1=p$, respectively. 
    \begin{itemize}
        \item \underline{Case 2.2.1 $(\alpha_1=0)$}: In this case, the number of agents at node $u_{n-1}$ in $\mathcal{C}_r$ is 1 at the end of round $r$. If agents achieve $\mathcal{C}^*$ at the end of round $r$ when $\mathcal{G}_r=\mathcal{C}_r$, then the following equations hold at the end of round $r$ in $\mathcal{C}_r$.
        \begin{align}
            X_1-\alpha_2+\beta_1&= p+1 \label{eq:59}\\
            X_2-(\beta_1+\beta_2)+(\alpha_2+\gamma_1)&= p+1\label{eq:60}\\
            X_3-(\gamma_1+\gamma_2)+\delta_1+\beta_2&=p+1\label{eq:61}
        \end{align}
        After taking the sum of Eq. \ref{eq:60} and Eq. \ref{eq:61}, we get the following equation. 
        \begin{align}
            X_2+X_3-\gamma_2-\beta_1+\delta_1+\alpha_2=2p+2 \label{eq:62}
        \end{align}
        
        In this case, the value of $\gamma_2'=\gamma_2$ as in $\mathcal{C}_r$ (resp $\mathcal{C}_r'$), node $u_{n-1}$ (resp. $u_{n-2}$ or $u_{n-1}$) becomes a node with 1 agent at the end of round $r$. Therefore, from Eq. \ref{eq:51} and Eq. \ref{eq:62}, we have $\beta_1=\alpha_2$ using $\gamma_2'=\gamma_2$. If we put $\beta_1=\alpha_2$ in \label{eq:59}, we get $X_1=p+1$. This leads to a contradiction as $X_1=p$ in this case. 
        \item \underline{Case 2.2.2 $(\alpha_1=p)$}: In this case, the number of agents at node $u_{n-1}$ in $\mathcal{C}_r$ is $p+1$ at the end of round $r$. Since $X_1=p$ and $\alpha_1=p$, the value of $\alpha_2=0$. If agents achieve $\mathcal{C}^*$ at the end of round $r$ when $\mathcal{G}_r=\mathcal{C}_r$, then the following equations hold at the end of round $r$ in $\mathcal{C}_r$.
        \begin{align}
            X_1-(\alpha_1+\alpha_2)+\beta_1&= 1 \text{ or }p+1 \label{eq:66}\\
            X_2-(\beta_1+\beta_2)+(\alpha_2+\gamma_1)&= 1 \text{ or } p+1\label{eq:67}\\
            X_3-(\gamma_1+\gamma_2)+\delta_1+\beta_2&=1 \text{ or }p+1\label{eq:68}
        \end{align}
        
        \begin{itemize}
            \item \underline{Case 2.2.2.1}: Consider the following holds:
            \begin{align}
                X_1-(\alpha_1+\alpha_2)+\beta_1&= 1\implies \beta_1=1 \text{ as } X_1=\alpha_1=p, \alpha_2=0\label{eq:72}
            \end{align}
            Therefore, the following equation holds in $\mathcal{C}_r$ at the end of round $r$.
            \begin{align}
                X_2-(\beta_1+\beta_2)+(\alpha_2+\gamma_1)&= p+1\label{eq:73}\\
                X_3-(\gamma_1+\gamma_2)+\delta_1+\beta_2&=p+1\label{eq:74}
            \end{align}      
            After taking the sum of Eq. \ref{eq:73} and Eq. \ref{eq:74}, we get the following equation using $\alpha_2=0$. 
            \begin{align}
                X_2+X_3-\gamma_2-\beta_1+\delta_1=2p+2 \label{eq:75}
            \end{align}
            In this case, the value of $\gamma_2'=\gamma_2$ in $\mathcal{C}'_r$ as the number of agents at node $u_{n-5}$ in $\mathcal{C}_r$ and $\mathcal{C}_r'$ is $p+1$; otherwise there will be two nodes with one agent in $\mathcal{C}_r'$. Using Eq. \ref{eq:51} and Eq. \ref{eq:75}, we have $\beta_1=0$, which leads to a contradiction due to Eq. \ref{eq:72}. 
            \item \underline{Case 2.2.2.2}: Consider the following holds:
            \begin{align}
                X_1-(\alpha_1+\alpha_2)+\beta_1&= p+1\implies \beta_1=p+1 \text{ as } X_1=\alpha_1=p, \alpha_2=0\label{eq:76}
            \end{align}
            Therefore, the following equation holds in $\mathcal{C}_r$ at the end of round $r$.
            \begin{align*}
                X_2-(\beta_1+\beta_2)+(\alpha_2+\gamma_1)&= 1 \text{ or } p+1\\
                X_3-(\gamma_1+\gamma_2)+\delta_1+\beta_2&=1 \text{ or } p+1
            \end{align*}      

            \noindent $\rightarrow$ \underline{Case 2.2.2.2.1}: Consider the following holds.
                 \begin{align}
                    X_2-(\beta_1+\beta_2)+(\alpha_2+\gamma_1)&= 1 \label{eq:77}\\
                    X_3-(\gamma_1+\gamma_2)+\delta_1+\beta_2&=p+1\label{eq:78}
                \end{align}
                After taking the sum of Eq. \ref{eq:77} and Eq. \ref{eq:78}, we have the following using $\alpha_2=0$.
                \begin{align}
                X_2+X_3-\gamma_2-\beta_1+\delta_1=p+2 \label{eq:79}
            \end{align}
            In this case, the value of $\gamma_2'=\gamma_2$ in $\mathcal{C}'_r$ as the number of agents at node $u_{n-5}$ in $\mathcal{C}_r$ and $\mathcal{C}_r'$ is $p+1$; otherwise there will be two nodes with one agent in $\mathcal{C}_r'$. Therefore, using Eq. \ref{eq:51} and Eq. \ref{eq:79}, we have $\beta_1=p$, which leads to a contradiction due to Eq. \ref{eq:76}. 
            
            \noindent $\rightarrow$ \underline{Case 2.2.2.2.2}: Consider the following holds.
                 \begin{align}
                    X_2-(\beta_1+\beta_2)+(\alpha_2+\gamma_1)&= p+1 \label{eq:80}\\
                    X_3-(\gamma_1+\gamma_2)+\delta_1+\beta_2&=1\label{eq:81}
                \end{align}
                After taking the sum of Eq. \ref{eq:80} and Eq. \ref{eq:81}, we have the following.
                \begin{align}
                X_2+X_3-\gamma_2-\beta_1+\delta_1=p+2 \label{eq:82}
            \end{align}
            In this case, the value of $\gamma_2'=\gamma_2$ in $\mathcal{C}'_r$ as the number of agents at node $u_{n-4}$ in $\mathcal{C}_r$ and $\mathcal{C}_r'$ is $p+1$; otherwise there will be two nodes with one agent in $\mathcal{C}_r'$. Therefore, using Eq. \ref{eq:51} and Eq. \ref{eq:82}, we have $\beta_1=p$, which leads to a contradiction due to Eq. \ref{eq:76}.
            
            \noindent $\rightarrow$ \underline{Case 2.2.2.2.3}: Consider the following holds.
                 \begin{align}
                    X_2-(\beta_1+\beta_2)+(\alpha_2+\gamma_1)&= p+1 \label{eq:83}\\
                    X_3-(\gamma_1+\gamma_2)+\delta_1+\beta_2&=p+1\label{eq:84}
                \end{align}
                After taking the sum of Eq. \ref{eq:83} and Eq. \ref{eq:84}, we have the following using $\alpha_2=0$.
                \begin{align}
                X_2+X_3-\gamma_2-\beta_1+\delta_1=2p+2 \label{eq:85}
            \end{align}

            In this case, the number of agents at node $u_{n-5}$ at the end of round $r$ is either 1 or $p+1$ in $\mathcal{C}_r$. Since at the end of round $r$, the number of agents at node $u_{n-2}$ is 1 in $\mathcal{C}_r'$, the number of agents at node $u_{n-5}$ has to be $p+1$ at the end of round $r$ in $\mathcal{C}_r'$. Hence, $\gamma_2'=\gamma_2$ (resp. $\gamma_2'=\gamma_2+p$) if the number of agents at the node $u_{n-5}$ at the end of round $r$ is $p+1$ (resp. 1) in $\mathcal{C}_r$. In both cases, the value of $\beta_1\neq p+1$ using Eq. \ref{eq:51} and Eq. \ref{eq:85} leads to a contradiction due to Eq. \ref{eq:76}
    \end{itemize} 
\end{itemize}
\end{itemize}

\vspace{0.1cm}
\begin{figure}
    \centering
    \includegraphics[width=0.75\linewidth]{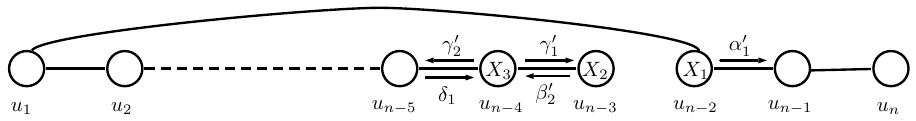}
    \caption{The construction of $\mathcal{C}_r'$ when $Y=0$.}
    \label{fig:1-hop_7}
\end{figure}
\noindent \underline{Case 3 $(Y =0)$}: In this case, $X_1\geq p+2$ is not possible as this leads to at least $(n-2)(p+2)$ agents in the system but we have total $(p+1)(n-2)+1$ many agents. Also, $X_1=0$ is not possible as no matter how agents move in $\mathcal{C}_r$, there will be at least two holes as all agents are deactivated at node $u_{n-2}$. Therefore, the value of $X_1$ is in $[1,p+1]$. If the value of $X_1$ is from $[1,p]$, and agents achieve $\mathcal{C}^*$ at the end of round $r$ in $\mathcal{C}_r$, then node $u_{n-1}$ has to become a node with one agent at the end of round $r$, i.e., $\alpha_1=1$. If the value of $X_1$ is $p+1$, then the adversary deactivates $1$ agent at node $u_{n-2}$ in $\mathcal{C}_r$ (which was not deactivated earlier). In this case, if agents achieve $\mathcal{C}^*$ at the end of round $r$ in $\mathcal{C}_r$, node $u_{n-1}$ has to become a node with one agent, i.e., $\alpha_1=1$. Therefore, the following holds in $\mathcal{C}_r$ at the end of round $r$ when $1\leq X_1\leq p+1$.
\begin{align}
     X_1-(\alpha_1+\alpha_2)+\beta_1&= p+1 \label{eq:52}\\
    X_2-(\beta_1+\beta_2)+(\alpha_2+\gamma_1)&= p+1\label{eq:53}\\
    X_3-(\gamma_1+\gamma_2)+\delta_1+\beta_2&=p+1\label{eq:54}
\end{align}

After taking the sum of Eq. \ref{eq:53} and Eq. \ref{eq:54}, we get the following equation. 
\begin{align}
    X_2+X_3-\gamma_2-\beta_1+\delta_1=2p+2 \label{eq:55}
\end{align}

The adversary constructs $\mathcal{C}_r'$ as follows. In path $P = u_1 \sim u_2 \sim \ldots \sim u_{n-2} \sim u_{n-1} \sim u_n(=v_n)$, the adversary removes edge between node $u_{n-3}$ and $u_{n-2}$, and add an edge between node $u_1$ and $u_{n-2}$. In configuration $\mathcal{C}_r'$, the adversary deactivates exactly the same agents that were deactivated in $\mathcal{C}_r$. This new graph is $\mathcal{C}_r'$, and constructed as $\mathcal{G}_r$ at the beginning of round $r$. In this graph, the 1-hop view of agents at nodes $u_1$, $u_2$, $u_{n-4}$, $u_{n-3}$, $u_{n-2}$, $u_{n-1}$ are changed, and the remaining agents at nodes $u_i$, $i\in [3, n-5]$ have the same 1-hop view as in $\mathcal{C}_r$. Since agents are equipped with 1-hop visibility and f-2-f communication, the movement of agents at nodes $u_1$, $u_2$, $u_{n-4}$, $u_{n-3}$, $u_{n-2}$, $u_{n-1}$ in $\mathcal{C}_r'$ may not be the same as $\mathcal{C}_r$, and other remaining agent's movement in $\mathcal{C}_r'$ remains the same as in $\mathcal{C}_r$. Therefore, let $\alpha_1'$ agent(s) move from node $u_{n-2}$ to $u_{n-1}$, from node $u_{n-3}$, let $\beta_2'$ agent(s) move to $u_{n-4}$, and from node $u_{n-4}$, let $\gamma_1'$ agent(s) move to $u_{n-3}$ and $\gamma_2'$ agent(s) move to $u_{n-5}$. We can see this in Fig. \ref{fig:1-hop_7}.

In $\mathcal{C}_r'$, the number of agents at node $u_{n-1}$ has to become 1; otherwise, at the end of round $r$ in $\mathcal{C}_r'$, node $u_{n-1}$ is either a hole or has agents between 2 and $p$. Therefore, the following inequality must hold.

\begin{align}
    X_2-\beta_2'+\gamma_1'&= p+1\label{eq:56}\\
    X_3-(\gamma_1'+\gamma_2')+\delta_1+\beta_2'&=p+1\label{eq:57}
\end{align}
After taking the sum of Eq. \ref{eq:56} and Eq. \ref{eq:57}, we get the following equation. 
\begin{align}
    X_2+X_3-\gamma_2'+\delta_1=2p+2 \label{eq:58}
\end{align}
In this case, the value of $\gamma_2'=\gamma_2$ as in $\mathcal{C}_r$ and $\mathcal{C}_r'$, node $u_{n-1}$ becomes a node with 1 agent at the end of round $r$. Therefore, from Eq. \ref{eq:55} and Eq. \ref{eq:58}, we have $\beta_1=0$ using $\gamma_2'=\gamma_2$. Since $\beta_1=0$, $\alpha_1=1$ and $X_1\in [1,p+1]$, therefore, $X_1-(\alpha_1+\alpha_2)+\beta_2\leq p$. This leads to a contradiction due to Eq. \ref{eq:52}.

\vspace{0.2cm}
\noindent\textbf{Fairness.} The set of deactivated agents remains identical in $\mathcal{C}_r$ and $\mathcal{C}_r'$ for every $r \geq 0$. Moreover, $\mathcal{G}_r$ is defined to be either $\mathcal{C}_r$ or $\mathcal{C}r'$. By construction, when forming $\mathcal{C}{r+1}$ from $\mathcal{G}_r$, every agent that was deactivated in round $r$ becomes activated in round $r+1$. Therefore, no agent remains deactivated forever, which ensures fairness.

\medskip
It is clear to observe that in construction $\mathcal{C}_r$ and $\mathcal{C}_r'$, the degree of node $w_n(=v_n)$ remains one. Further, since the proof does not rely on any limitation on node storage, agent memory, or agent knowledge, it remains valid even if nodes are equipped with unbounded storage and agents have unbounded memory and complete parameter knowledge. This completes the proof.
\end{proof}

\begin{theorem}\label{thm:imp-global}
($n\geq 3, p\geq 1$) If the adversary deactivates at most $p$ agents per round, exploration is impossible with $(n-2)(p+1)+1$ agents in 1-interval connected graphs with $0$-hop visibility and global communication, even if nodes have unique identifiers and unbounded storage, and agents possess unbounded memory and complete parameter knowledge.
\end{theorem}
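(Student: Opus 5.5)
We construct an adversary that keeps one node $v$ a hole forever, following the scheme of Theorem~\ref{thm:imp} and the sketch preceding Result~3 in Section~\ref{sec:tech}. Fix an initial hole $v$; if there is none, exploration is already complete. The invariant, proved by induction on $r$, is that $v$ is a hole at the end of every round $r$. In every round $\mathcal{G}_r$ will be a clique on $V\setminus\{v\}$ with $v$ attached as a pendant to a single chosen node $w$. This graph is connected, so $1$-interval connectivity holds. The only agents that can enter $v$ in round $r$ are active agents at $w$.

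At the start of round $r$ we split into cases on the configuration of the $(n-2)(p+1)+1$ agents on $V\setminus\{v\}$.

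\textbf{Case A} (another hole $u\neq v$ exists). Attach $v$ to $u$ and deactivate nobody. No agent is adjacent to $v$.

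\textbf{Case B} (no other hole, but the configuration is not $\mathcal{C}^*$). A short counting argument shows that at least two nodes of $V\setminus\{v\}$ hold at most $p$ agents. Otherwise at least $n-2$ of the $n-1$ nonempty nodes hold $\ge p+1$ agents and the remaining node holds $\ge 1$. Since the total is exactly $(n-2)(p+1)+1$, all these bounds are tight, and the configuration is $\mathcal{C}^*$, a contradiction. Attach $v$ to one such node $w$ and deactivate all of its at most $p$ agents.

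\textbf{Case C} (configuration is $\mathcal{C}^*$). Let $w$ be the unique node holding a single agent $a$. Attach $v$ to $w$ and deactivate nobody.

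For fairness, we choose $w$ in Case B to avoid the node whose agents were deactivated in round $r-1$; this is possible since there are two candidate nodes. Then every deactivated agent is active in the following round, so no agent is inactive in two consecutive rounds.

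The crux is Case C, where $a$ is active and adjacent to $v$. The degree of $w$ is $n-1$, and its ports are labelled $0,\dots,n-2$. The adversary fixes the clique and the port labels at every node other than $w$. It also keeps the multiset of port numbers at $w$ fixed as $\{0,\dots,n-2\}$, but leaves free which port leads to $v$.

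With $0$-hop visibility, the local input of each agent is its own memory, the set of port numbers at its node, and the messages it receives. Under global communication these messages are themselves computed from such inputs. Hence, by induction on the communication round, no agent's information depends on which port at $w$ is assigned to $v$. The adversary therefore simulates the deterministic algorithm and learns the port $\lambda$ that $a$ would take, if any. It then assigns $v$ to any port at $w$ other than $\lambda$, which is possible since $n-1\ge 2$. The edge on port $\lambda$ of $w$ leads to some node other than $v$, and all other agents are at distance $\ge 2$ from $v$. So $v$ stays a hole. The resulting configuration may or may not be $\mathcal{C}^*$, but the case analysis is reapplied in the next round, so the induction closes.

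The main obstacle is making the indistinguishability in Case C rigorous. One must verify that nothing observable to any agent, including the agents reached via global communication, reveals the identity of the node behind a port. With $0$-hop visibility and dynamic port labelling this holds, since agents see only port numbers, not endpoints. Node identifiers and unbounded node storage do not help either: an agent reads only the storage and identifier of its current node, and the port permutation at $w$ does not affect these before the move. This also shows the result holds under all the extra capabilities listed in the statement.
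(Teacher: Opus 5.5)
Your proof is correct and takes essentially the paper's approach. The fixed hole is kept as a pendant attached to one of three kinds of node: another hole; a node with at most $p$ agents, all of which are deactivated (switching to a different node than in the previous round, for fairness); or the single-agent node of $\mathcal{C}^*$, where a port relabelling that $0$-hop visibility cannot detect misdirects the lone agent. The paper's formal proof differs only cosmetically: it uses a count-ordered path instead of a clique, and it applies the port trick whenever the pendant's neighbour holds exactly one agent, not only in $\mathcal{C}^*$. Your case split matches the sketch the paper gives before Result~3.
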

\begin{proof}
   Let $G$ be a clique of size $n$. Consider any initial configuration with at least one hole. Label the nodes $v_1,\dots,v_n$ such that $\alpha(v_i)\ge \alpha(v_{i+1})$ for all $i\in[1,n-1]$, where $\alpha(v_j)$ denotes the initial number of agents at $v_j$. We construct $\mathcal{G}_r$ such that $v_n$ remains a hole forever. We use two notations. Let $H_r$ be a graph at round $r$. For any node $v \in H_r$, let $S_{H_r}(v)$ and $E_{H_r}(u)$ denote the number of agents at the beginning and end of round $r$ in $H_r$, respectively.

       \vspace{0.1cm}
\noindent \textbf{Graph $\mathcal{C}_0$:} Consider a path $P_0:=v_1\sim v_2\sim v_3\sim \ldots v_{n-1}\sim v_n$. In this case, $\mathcal{C}_0=P_0$. The adversary deactivates the agents as follows.
\begin{itemize}
    \item If there are at least two agents at node $v_{n-1}$, the adversary deactivates all agents at node $v_{n-1}$, and all other agents are active. The adversary can deactivate all agents, since node $v_{n-1}$ can contain at most $p$ agents. If not, $\alpha(v_i)\geq p+1$, $i\in [1,n-1]$, as $\alpha(v_i)\ge \alpha(v_{i+1})$ for every $i\in[1,n-1]$. In this case, the total number of agents is at least $(n-1)(p+1)$, which is more than $(n-2)(p+1)+1$.
    \item If there is at most one agent at node $v_{n-1}$, it does not deactivate any agent in $\mathcal{C}_0$.
\end{itemize}

    \vspace{0.15cm}
\noindent \textbf{Graph $\mathcal{C}_r$ at round $r\geq 1$:} At round $r-1$, let $w_1$, $w_2$, \ldots,$w_{n-1}$, $w_n(=v_n)$ be nodes in $\mathcal{G}_{r-1}$. Note that in $\mathcal{G}_{r-1}$, the value of $deg_r(w_n)$ is 1 (we observe this fact after completing the construction of $\mathcal{G}_r$). Without loss of generality, let $w_{n-1}$ be the neighbour of $w_n$ in $\mathcal{G}_{r-1}$. Let $w_1', w_2', \ldots, w_n'$ be an ordering of the nodes such that $\bigcup_{i=1}^n \{w_i'\} = \bigcup_{i=1}^n \{w_i\}$, $w_n'=w_n$ and $E_{\mathcal{G}_{r-1}}(w_i') \ge E_{\mathcal{G}_{r-1}}(w_{i+1}')$ for every $i \in [1, n-1]$. There are two possible cases: 
\begin{itemize}
    \item \textbf{Case 1} \textbf{$\bm{\big(\exists \;w\in \{w_1', w_2', \ldots, w_{n-1}'\}}$ such that $\bm{E_{\mathcal{G}_{r-1}}(w)\leq 1\big)}$}: In this case, $E_{\mathcal{G}_{r-1}}(w_{n-1}')\leq 1$. Consider a path, $P_r:=w_1'\sim w_2'\sim w_3'\sim \ldots w_{n-1}'\sim w_n'(=w_n=v_n)$. In this case, $\mathcal{C}_r=P_r$, and the adversary does not deactivate any agent in $\mathcal{C}_r$.
    \item \textbf{Case 2} \textbf{$\bm{\big(\nexists \;w\in \{w_1', w_2', \ldots, w_{n-1}'\}}$ such that $\bm{E_{\mathcal{G}_{r-1}}(w)\leq 1\big)}$}: The value of $E_{\mathcal{G}_{r-1}}(w'_{n-1})$ is at most $p$. If not, we have $E_{\mathcal{G}_{r-1}}(w_i') \ge p+1$ for every $i \in [1, n-1]$, implying that the total number of agents is at least $(p+1)(n-1)$, which is greater than $(p+1)(n-2)+1$. Therefore, $E_{\mathcal{G}_{r-1}}(w'_{n-1})\in [2,p]$. Since $E_{\mathcal{G}_{r-1}}(w'_{n-1})\in [2,p]$, the value of $E_{\mathcal{G}_{r-1}}(w_{n-2}')$ is at most $p$. If $w_{n-1}' \neq w_{n-1}$, then the adversary defines the path
$
P_r := w_1' \sim w_2' \sim w_3' \sim \cdots \sim w_{n-1}' \sim w_n'(=w_n=v_n),
$
and deactivates all agents at node $w_{n-1}'$, while keeping all other agents active. Otherwise, if $w_{n-1}' = w_{n-1}$, the adversary defines
$
P_r := w_1' \sim w_2' \sim w_3' \sim \cdots \sim w_{n-1}' \sim w_{n-2}' \sim w_n'(=w_n=v_n),
$ and deactivates all agents at node $w_{n-2}'$, keeping all other agents active. In both cases, $\mathcal{C}_r = P_r$.
\end{itemize}

\medskip
    As per the construction of $\mathcal{C}_r$ for round $r\ge 0$, the graph is a path $P = u_1 \sim u_2 \sim \ldots \sim u_{n-2} \sim u_{n-1} \sim u_n(=v_n)$. The values $S_{\mathcal{C}_r}(u_i)$ do not increase as we move from $u_i$ to $u_{i+1}$ for every $i\in[1,n-3]$, and $S_{\mathcal{C}_r}(u_{n-1})\leq p$. Since the adversary knows the agents' algorithm and the structure of $\mathcal{C}_r$, it can pre-compute the agents' movement and determine whether one of the agents visits node $v_n$ when $\mathcal{G}_r=\mathcal{C}_r$ at the end of round $r$. If not, the adversary simply forms $\mathcal{G}_r=\mathcal{C}_r$ at the beginning of round $r$. Otherwise, the adversary instead forms $\mathcal{C}_r'$ at the beginning of round $r$. We show that whenever one of the agents visits node $v_n$ under $\mathcal{C}_r$, node $v_n$ is not visited by any agent under $\mathcal{C}_r'$ at the end of round $r$. The construction of $\mathcal{C}_r'$ depends on the value of $S_{\mathcal{C}_r}(u_{\,n-1})$. Let $S_{\mathcal{C}_r}(u_{n-1}) = Y$. Since $S_{\mathcal{C}_r}(u_{n-1}) \le p$, it follows that $Y \in [0,p]$.

\begin{figure}[h]
\begin{minipage}[b]{0.48\linewidth}
 \centering
    \includegraphics[width=0.7\linewidth]{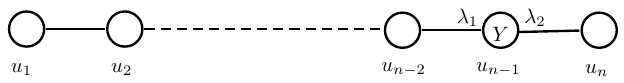}
    \caption{The construction of $\mathcal{C}_r$ when $Y=1$.}
    \label{fig:global_4}
 \end{minipage}
 \begin{minipage}[b]{0.48\linewidth}
 \centering
    \includegraphics[width=0.7\linewidth]{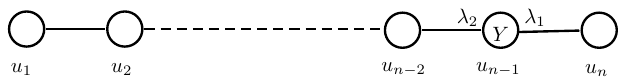}
    \caption{The construction of $\mathcal{C}_r'$ when $Y=1$.}
    \label{fig:global_5}
 \end{minipage}
 \end{figure}

 \begin{itemize}
     \item \underline{$Y\in [0,p]\setminus\{1\}$:} Since either no agent is present at node $u_{n-1}$ or all agents at node $u_{n-1}$ are not active in configuration $\mathcal{C}_r$, node $u_n$ is at a distance of at least two from every active agent in $\mathcal{C}_r$.
     \item \underline{$Y = 1$:} In this case, all agents are active. If node $u_n$ is to be visited by some agent, then the agent located at node $u_{n-1}$ must move to $u_n$. Let $\lambda_1$ denote the outgoing port at node $u_{n-1}$ leading to node $u_{n-2}$, and let $\lambda_2$ denote the outgoing port at node $u_{n-1}$ leading to node $u_n$ (refer to Fig. \ref{fig:global_4}). We now describe the construction of the configuration $\mathcal{C}_r'$. Agents are equipped with global communication and $0$-hop visibility; hence, an agent can sense only the outgoing ports of the node it currently occupies. The adversary constructs $\mathcal{C}_r'$ so that the $0$-hop visibility view of every agent in $\mathcal{C}_r'$ is identical to that in $\mathcal{C}_r$. Consequently, even with global communication, agents cannot detect that the configuration has changed. Therefore, any port choice made by an agent in $\mathcal{C}_r$ is also made in $\mathcal{C}_r'$. Let $P_r' = u_1 \sim u_2 \sim \cdots \sim u_{n-2} \sim u_{n-1} \sim u_n (= v_n),$ where, in $\mathcal{C}_r'$, the outgoing port $\lambda_1$ at node $u_{n-1}$ leads to node $u_n$, and the outgoing port $\lambda_2$ leads to node $u_{n-2}$ (refer to Fig. \ref{fig:global_5}). Thus, $\mathcal{C}_r'$ is the path $P_r'$, and the adversary forms $\mathcal{C}_r'$ as the snapshot $\mathcal{G}_r$ instead of $\mathcal{C}_r$. Since the $0$-hop visibility of every agent is identical in $\mathcal{C}_r$ and $\mathcal{C}_r'$, the agents compute the same port choices in both configurations. As a result, if an agent at node $u_{n-1}$ selects port $\lambda_2$ in $\mathcal{C}_r$ to move to node $u_n$, it selects the same port in $\mathcal{C}_r'$ and instead moves to node $u_{n-2}$. Consequently, node $u_n$ is not visited by any active agent at the end of round $r$. Since this argument is valid for every $r\geq 0$, no active agent visits node $v_n$ at any round $r$. 
 \end{itemize}
 
\medskip
\noindent\textbf{Fairness.} The way we construct $\mathcal{C}_r$ from $\mathcal{G}_{r-1}$, it is trivial that those agents which were deactivated in round $r-1$ are activated in round $r$. This guarantees fairness. 
       
Since the proof does not rely on any limitations except $0$-hop visibility and global communication, it remains valid even if nodes have unbounded storage and unique identifiers, and agents have unbounded memory and complete parameter knowledge. This completes the proof. 
\end{proof}

Due to Theorem \ref{thm:imp-1-hop} and Theorem \ref{thm:imp-global}, we have the following observation.

\begin{observation}\label{obs:necessary}
For $1$-interval connected graphs on $n$ nodes, exploration against an adversary that can deactivate at most $p$ agents per round might be possible using $(n-2)(p+1)+1$ agents when agents have $1$-hop visibility and $1$-hop communication. Consequently, if exploration is solvable with $k$ agents under an adversary deactivating at most $p$ agents per round, then
$k \ge (p+1)(n-2)+1$, which implies
$p \le \left\lfloor \frac{k-1}{n-2} \right\rfloor - 1$.
\end{observation}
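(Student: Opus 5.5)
The observation combines Theorem~\ref{thm:imp}, Theorem~\ref{thm:imp-1-hop}, Theorem~\ref{thm:imp-global}, and the {\fsync} lower bound of~\cite{saxena_2025}. The argument has two parts: a necessity statement about capabilities, and a counting consequence for $p$.

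\textbf{Necessity of the capabilities.} Fix $(n-2)(p+1)+1$ agents. Suppose the agents lack $1$-hop visibility, so they have $0$-hop visibility. Since global communication dominates every $\ell_c$-hop communication model, Theorem~\ref{thm:imp-global} rules out exploration under any communication range. Now suppose the agents lack $1$-hop communication, so they have only f-2-f communication. Because full visibility is not needed here, I restrict to $1$-hop visibility. Theorem~\ref{thm:imp-1-hop} then rules out exploration with f-2-f communication, at least for $n\ge 9$. Agents weaker in both respects are covered a fortiori, since any algorithm for a weaker model runs unchanged in a stronger one. Hence the only remaining possibility is that agents have at least $1$-hop visibility and at least $1$-hop communication, which is the ``might be possible'' clause.

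\textbf{Counting consequence.} Suppose exploration is solvable with $k$ agents against an adversary deactivating at most $p$ agents per round. By Theorem~\ref{thm:imp} (together with Observation~\ref{obs}), every $k\le (n-2)(p+1)$ is impossible, so $k\ge (n-2)(p+1)+1$. Rearranging gives $p+1\le \frac{k-1}{n-2}$. Since $p$ is an integer, $p\le \left\lfloor \frac{k-1}{n-2}\right\rfloor-1$. For $p\ge 1$ this already forces $k\ge n-1$. For $p=0$ (the {\fsync} case), the bound $k\ge n-1$ comes from~\cite{saxena_2025}. I will also record the identity $\left\lceil \frac{k}{n-2}\right\rceil-1=\left\lfloor\frac{k-1}{n-2}\right\rfloor$ for natural $k$ and $n\ge 3$, obtained by writing $k-1=q(n-2)+s$ with $0\le s<n-2$. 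This identity shows the bound coincides with Observation~\ref{obs}.

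\textbf{Main obstacle.} The delicate step is the monotonicity argument in the necessity part, namely that each impossibility transfers to every weaker model. Stronger models only add information to each agent's view, so any adversary that defeats the stronger model also defeats the weaker one. I will also flag the hypothesis $n\ge 9$ inherited from Theorem~\ref{thm:imp-1-hop}.
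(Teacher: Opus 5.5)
Your route is the paper's own. The capability clause comes from Theorems~\ref{thm:imp-global} and~\ref{thm:imp-1-hop} plus the downward transfer of impossibility to weaker models. The counting clause comes from Theorem~\ref{thm:imp}/Observation~\ref{obs}, the {\fsync} bound of~\cite{saxena_2025} for $p=0$, and the identity $\lceil k/(n-2)\rceil-1=\lfloor (k-1)/(n-2)\rfloor$. Your counting argument and your proof of the identity are correct.

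The step that fails is ``Because full visibility is not needed here, I restrict to $1$-hop visibility.'' For an impossibility claim, weakening the visibility is not without loss of generality. By your own monotonicity principle, impossibility transfers only from stronger models to weaker ones. So Theorem~\ref{thm:imp-1-hop} excludes f-2-f communication only for $\ell_v\le 1$, and Theorem~\ref{thm:imp-global} only for $\ell_v=0$. Nothing in the paper excludes f-2-f communication combined with larger visibility, such as full visibility. The proof of Theorem~\ref{thm:imp-1-hop} also does not carry over there. Its case analysis relies on agents away from the swapped edges keeping identical views in $\mathcal{C}_r$ and $\mathcal{C}_r'$; for example, the equations treat the moves out of $u_{n-4}$ and $u_{n-5}$ as unchanged. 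That premise is false once every agent sees the whole snapshot. Hence your conclusion that the only remaining possibility is $\ell_v\ge 1$ and $\ell_c\ge 1$ does not follow: settings with $\ell_v\ge 2$ and $\ell_c=0$ remain undecided.

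What the two theorems do give (the second only for $n\ge 9$) is a threshold statement. First, $0$-hop visibility fails under every communication range. Second, with $1$-hop visibility, f-2-f communication fails. So $1$-hop visibility is necessary, and given $1$-hop visibility, $1$-hop communication is necessary. That is all the observation's ``might be possible'' clause asserts. It is also all the paper's one-line justification (``due to Theorem~\ref{thm:imp-1-hop} and Theorem~\ref{thm:imp-global}'') supports. Replace your ``only remaining possibility'' sentence with this weaker statement and the proposal is complete. The introduction's Result~4 words this as unconditional necessity of $1$-hop communication, which overreaches in the same way, so do not rely on that phrasing.
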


\section{Exploration in 1-interval connected graphs}\label{sec:ssync_algo}
In this section, we present {\ssync}\_$\mathcal{EXPO}$ for exploration in $1$-interval connected graphs under the {\ssync} scheduler. The high-level idea of our approach is in Section~\ref{sec:tech}. Let $\mathcal{G}_r=(V,E(r))$ denote the graph at round $r$. In the graph, there are $k$ agents, and the adversary can deactivate at most $p=\left\lfloor \frac{k-1}{n-2} \right\rfloor - 1$ agents per round. Agents are equipped with $1$-hop visibility and global communication\footnote{A discussion on the technical difficulty with weaker communication is discussed in the conclusion (refer to Section~\ref{sec:con}).}. Based on the active agents in $\mathcal{G}_r$, we introduce the following notion.

\vspace{-0.2cm}

\begin{definition}\label{def:1}
\textbf{(Connected components of active agents in $\mathcal{G}_r$)}
The graph $\mathcal{G}_r$ can be partitioned into subgraphs $\mathcal{G}_r^1, \mathcal{G}_r^2, \ldots, \mathcal{G}_r^{\ell}$, where each $\mathcal{G}_r^j = (V_j, E_j)$ for $1 \leq j \leq {\ell}$, such that: (i) For every $u \in V_j$, at least one active agent is present at $u$, (ii) For all $j \neq i$, $V_j \cap V_i = \emptyset$, and (iii) There is no edge $e = (u_1,u_2) \in E(r)$ with $u_1 \in V_j$ and $u_2 \in V_i$ for $j \neq i$.
\end{definition}

We denote by $CCA(\mathcal{G}_r)$ the collection of all such subgraphs. Intuitively, $CCA(\mathcal{G}_r)$ represents the connected components of the subgraph of $\mathcal{G}_r$ induced by nodes that host at least one active agent. 

\vspace{0.1cm}
\noindent \textbf{Notation and parameters:} Each agent $\mathcal{A}$ maintains the following parameters. The variable $\mathcal{A}.\text{ID}$ denotes the unique identifier of agent $\mathcal{A}$. The counter $\mathcal{A}.p$ stores the number of inactive agents observed by $\mathcal{A}$; it takes non-negative integer values and is initially set to $0$. The boolean variable $\mathcal{A}.\text{EXP}$ indicates whether exploration has been achieved, where value $1$ denotes success and $0$ otherwise; initially, $\mathcal{A}.\text{EXP}=0$. The variable $\mathcal{A}.\text{srcID}$ stores the identifier currently considered as the source and is initially set to $\bot$. The boolean variable $\mathcal{A}.\text{phase}$ indicates whether the source identifier has been fixed, where $0$ denotes not fixed, and $1$ denotes fixed; initially, $\mathcal{A}.\text{phase}=0$. For any two adjacent nodes $v_1$ and $v_2$, $\pi(v_1,v_2)$ denotes the port label at node $v_1$ leading to its neighbor $v_2$. Finally, for any set $S$, we denote by $|S|$ the cardinality of $S$, that is, the number of elements in $S$.

\vspace{0.1cm}
The execution of {\ssync}\_$\mathcal{EXPO}$ proceeds in three phases in each round $r$. Let an active agent $\mathcal{A}$ be located at node $v \in \mathcal{G}_r$. In round $r$, an active agent $\mathcal{A}$ executes these three phases.

\medskip
\noindent \textbf{Phase 1 (1-hop view collection)}: For each port $\overline{q}_1 \in \{0,1,\ldots,\deg_r(v)-1\}$, agent $\mathcal{A}$ performs the following steps.
\begin{itemize}
    \item Let $u$ be the neighbor of $v$ reachable via port $\overline{q}_1$. Let $ID(v)$ and $ID(u)$ denote the sets of identifiers of agents located at $v$ and $u$, respectively.
    
    \item If $ID(u) \neq \emptyset$, then define $C_v^{\overline{q}_1} = (ID(v), \overline{q}_1, ID(u))$.
    
    \item Otherwise, if $ID(u) = \emptyset$, define $C_v^{\overline{q}_1} = (ID(v), \overline{q}_1, \emptyset)$., where $\emptyset$ indicates that $u$ is a hole.
\end{itemize}

Let $C_v = (C_v^0, C_v^1, \ldots, C_v^{d})$ denote the 1-hop view of agent $\mathcal{A}$ at node $v$, where $d = \deg_r(v)-1$. And, let $\text{Info}(\mathcal{A})=(\mathcal{A}.ID, \mathcal{A}.p, \mathcal{A}.\text{EXP},\mathcal{A}.\text{srcID}, \mathcal{A}.\text{phase})$. Agent $\mathcal{A}$ broadcasts $\big(\text{Info}(\mathcal{A}), C_v\big)$.

\medskip
\noindent \textbf{Phase 2 (Graph reconstruction)}: Let $v_1, v_2, \ldots, v_{\ell}$ be nodes with at least one active agent at round $r$, and note that $v \in \bigcup_{i=1}^{\ell}\{v_i\}$. Since each active agent broadcasts its 1-hop view, the 1-hop view of a node $v_i$ may be broadcast by multiple agents in the same round. Let $b_1^i, b_2^i, \ldots, b_{\ell_i}^i$ be the active agents present at node $v_i$, for every $i \in [1,\ell]$. Each agent $b_j^i$ at node $v_i$ broadcasts $\big(\text{Info}(b_j^i), C_{v_i}\big)$. From $C_{v_i}$, the set $ID(v_i)$ can be extracted, and using $ID(v_i)$ together with the identifiers of agents $b_j^i$, the minimum active agent at node $v_i$ can be determined; without loss of generality, let $b_1^i$ be this agent. Let $x_i = |ID(v_i)| - \ell_i$ denote the number of inactive agents at node $v_i$. If there exists a port $q$ at node $v_i$ leading to a node $w$ with at least one agent, then $ID(w)$ can be obtained from $C_{v_i}$; if no agent from $w$ broadcasts its 1-hop view, it follows that all agents at $w$ are inactive. Let $w_1, w_2, \ldots, w_{\tau}$ be such nodes which have only inactive agents, identifiable from $\bigcup_{i=1}^{\ell}\{C_{v_i}\}$. Define $\mathcal{X} = \sum_{j=1}^{\tau} |ID(w_j)| + \sum_{i=1}^{\ell} x_i$, and $\mathcal{Y} = \max\{ b_j^i.p \mid j \in [1,\ell_i],\, i \in [1,\ell] \}$. Here, $\mathcal{X}$ represents the number of inactive agents at round $r$ which active agents can observe, and $\mathcal{Y}$ is the maximum $p$ value among all active agents. Based on the gathered knowledge using global communication, it can define the above notations. It forms the map $G'=(V',E')$ as follows.

\begin{itemize}
    \item The node set is defined as $V' = \left\{ ID_{v_i} \mid i \in [1,{\ell}]\right\}$, where $ID_{v_i}$ is the ID of minimum active agent (i.e., $b_1^i.ID$) at node $v_i$. 

    \item The edge set $E'$ is constructed as follows. For every pair of tuples $(ID(u_1), q_1, ID(u_2))$ and $(ID(u_2), q_2, ID(u_1))$, agent $\mathcal{A}$ adds an undirected edge $(ID_{u_1}, ID_{u_2})$ with port labels $\pi(ID_{u_1}, ID_{u_2}) = q_1$ and $\pi(ID_{u_2}, ID_{u_1}) = q_2$.

    \item For each tuple $(ID(u_1), q_1, ID(u_2))$, if no view is received from any agent at $u_2$, agent $\mathcal{A}$ marks port $q_1$ at node $ID_{u_1}$ as leading to a node containing only inactive agents. It stores set $ID(u_2)$ corresponding port $q_1$.

    \item For each tuple $(ID(u_1), q_1, \emptyset)$, agent $\mathcal{A}$ marks port $q_1$ at node $ID_{u_1}$ leads to a hole.
\end{itemize}

\medskip
\noindent \textbf{Phase 3 (Move)}: In the second phase, agent $\mathcal{A}$ constructs a map $G'$. If there is an agent $b_j^i$ at node $v_i$ such that $b_j^i.\text{EXP}=1$, it sets $\mathcal{A}.\text{EXP}=1$. Otherwise, it does the following. If $G'$ is disconnected, it considers the connected component $G''$ of $G'$ in which node $v$ (location of agent $\mathcal{A}$ at round $r$) is present. There are two cases based on $b_j^i.\text{phase}=1$, which are as follows.

\vspace{0.1cm}
\noindent \textbf{Case 1 (there is no agent $b_j^i$ with $b_j^i.\text{phase}=1$):} It sets $\mathcal{A}.p=\max\{\mathcal{X}, \mathcal{Y}\}$. There are two sub-cases which are as follows.

\begin{itemize}
    \item \textbf{Sub-case 1 (at least one node in $G'$ has $\mathcal{A}.p + 2$ or more agents)}: If agent $\mathcal{A}$ is not the minimum active agent at node $v$, it stays idle; otherwise, based on $G'$, it proceeds as follows. Let $\overline{w}_1, \overline{w}_2, \ldots, \overline{w}_{\lambda_1}$ be the nodes in $G'$ that have at least $\mathcal{A}.p + 2$ agents. Let $\overline{b}_j$ be the minimum ID active agent at node $\overline{w}_j$. Without loss of generality, let $\overline{b}_1.ID = \min\{ \overline{b}_j.ID \mid j \in [1,\lambda_1] \}$. Consider the active connected component (say $H$) of $G'$ where node $\overline{w}_1$ is present. If $H\neq G''$, agent $\mathcal{A}$ stays idle at round $r$. Otherwise, it does the following. 

    \vspace{0.1cm}
     \noindent \underline{A port in $G''$ that leads to a hole:} Let $\overline{w}'_1, \overline{w}'_2, \ldots, \overline{w}'_{\lambda_2}$ be the nodes in $G''$ that have at least one port leading to a hole. Let $\overline{b}'_j$ be the minimum ID active agent at node $\overline{w}'_j$. Without loss of generality, let $\overline{b}'_1.ID = \min\{ \overline{b}'_j.ID \mid j \in [1,\lambda_2] \}$. Since agent $\mathcal{A}$ is aware of $G''$, it considers a shortest path $P$ between $\overline{w}_1$ and $\overline{w}'_1$. If there are multiple shortest paths between $\overline{w}_1$ and $\overline{w}'_1$, it selects the lexicographically smallest among them. Let $P = \overline{u}_1(=\overline{w}_1) \sim \overline{u}_2 \sim \ldots \sim \overline{u}_{\lambda}(=\overline{w}'_1)$ be this lexicographically shortest path in $G''$. If $v = \overline{u}_j$ for some $1 \leq j < \lambda$, agent $\mathcal{A}$ moves to node $\overline{u}_{j+1}$. Otherwise, if agent $\mathcal{A}$ is at node $\overline{u}_{\lambda}$, it moves through the minimum available port that leads to a hole. In all other cases, agent $\mathcal{A}$ remains at node $v$. This is the \text{pipeline} approach.

            \vspace{0.15cm}
         \noindent \underline{No port in $G''$ that leads to a hole:} If no port in $G'$ leads to a node which has only inactive agents, it updates $\mathcal{A}.EXP=1$. Otherwise, it does the following. Let $\overline{w}'_1, \overline{w}'_2, \ldots, \overline{w}'_{\lambda_2}$ be the nodes in $G''$ that have at least one port leading to a node where all agents are inactive. Let $\overline{b}'_j$ be the minimum ID active agent at node $\overline{w}'_j$. Without loss of generality, let $\overline{b}'_1.ID = \min\{ \overline{b}'_j.ID \mid j \in [1,\lambda_2] \}$. Since agent $\mathcal{A}$ is aware of $G''$, it considers a shortest path $P$ between $\overline{w}_1$ and $\overline{w}'_1$. If there are multiple shortest paths between $\overline{w}_1$ and $\overline{w}'_1$, it selects the lexicographically smallest among them. Let $P = \overline{u}_1(=\overline{w}_1) \sim \overline{u}_2 \sim \ldots \sim \overline{u}_{\lambda}(=\overline{w}'_1)$ be this lexicographically shortest path in $G''$. If $v = \overline{u}_j$ for some $1 \leq j < \lambda$, agent $\mathcal{A}$ moves to node $\overline{u}_{j+1}$. Otherwise, if agent $\mathcal{A}$ is at node $\overline{u}_{\lambda}$, it moves through the minimum available port that leads to a node which has only inactive agent(s). In all other cases, agent $\mathcal{A}$ remains at node $v$.
         \item  \textbf{Sub-case 2 (no node in $G'$ has $\mathcal{A}.p + 2$ or more agents)}: In the analysis, we show that at most one hole remains (refer to observation \ref{obs:neg2hole}). If only one hole exists, then $\tau$, the number of nodes containing only inactive agents that are visible to at least one active agent, is at most $1$ (refer to Lemma \ref{lm:key}). Agent $\mathcal{A}$ updates $\mathcal{A}.\text{phase} = 1$. Further, if agent $\mathcal{A}$ finds that $\tau \neq 1$, it updates $\mathcal{A}.\text{EXP} = 1$. Otherwise, let
$x = \min \big\{ \min\{ |ID(v_i)|\,\, \colon\,\, i \in [1,\ell] \},\ |ID(w_1)| \big\}$,
where $x$ denotes the minimum number of agents present at any node in $G'$, including nodes that contain only inactive agents (i.e., $w_1$). Agent $\mathcal{A}$ selects a node $u$ containing $x$ agents; if multiple choices exist among the nodes $v_i$ and $w_1$, it selects the node containing the minimum ID agent and stores this identifier in $\mathcal{A}.\text{srcID}$. In this round, agent $\mathcal{A}$ remains at its current position in this round.
\end{itemize}

\vspace{0.15cm}
\noindent \textbf{Case 2 (there is an agent $b_j^i$ with $b_j^i.\text{phase}=1$):} It sets $\mathcal{A}.\text{phase}=1$, $\mathcal{A}.\text{srcID}=b_j^i.\text{srcID}$, and $\mathcal{A}.p=\max\{\mathcal{X},\mathcal{Y}\}$. There are two sub-cases.

\begin{itemize}
    \item \textbf{Sub-case 1 (a port in $G'$ leads to a hole)}: Without loss of generality, one of the port of $v_2$ leads to a hole. If $\mathcal{A}$ is at node $v_2$, it set $\mathcal{A}.\text{EXP}=1$, and moves to the hole. Otherwise, it sets $\mathcal{A}.\text{EXP}=1$, and stays at its position.
    \item \textbf{Sub-case 2 (no port in $G'$ leads to a hole)}: 
If there is no port in $G'$ that leads to a node containing only inactive agents, agent $\mathcal{A}$ updates $\mathcal{A}.\text{EXP}=1$. Otherwise, the behavior of $\mathcal{A}$ depends on the value of $\mathcal{A}.\text{srcID}$. If $\mathcal{A}.\text{srcID}\notin ID(v_i)$ for every $i\in[1,\ell]$ and $\mathcal{A}.\text{srcID}\notin ID(w_1)$, then $\mathcal{A}$ sets $\mathcal{A}.\text{EXP}=1$. If $\mathcal{A}.\text{srcID}\in ID(w_1)$, then $\mathcal{A}$ stays at its current position. Otherwise, without loss of generality, assume that $\mathcal{A}.\text{srcID}\in ID(v_1)$. If $|ID(v_1)|=1$ and there exists a port in $G'$ leading to a node containing only inactive agents, then $\mathcal{A}$ sets $\mathcal{A}.\text{EXP}=1$. If this is not the case, then there exists at least one agent at node $v_1$ other than $\mathcal{A}.\text{srcID}$. If no such agent is active, then $\mathcal{A}$ stays at its position. Otherwise, there exists at least one active agent other than $\mathcal{A}.\text{srcID}$, and $\mathcal{A}$ considers a shortest path $P$ between $v_1$ and $w_1$. If multiple shortest paths exist, $\mathcal{A}$ selects the lexicographically smallest one. Let $P=\overline{u}_1(=v_1)\sim\overline{u}_2\sim\cdots\sim\overline{u}_{\lambda}(=w_1)$ denote this path in $G'$. If $\mathcal{A}.\text{ID}=\mathcal{A}.\text{srcID}$, then $\mathcal{A}$ stays at its position; otherwise, if $\mathcal{A}$ is at node $\overline{u}_j$ with $j<\lambda$, it moves to node $\overline{u}_{j+1}$.
\end{itemize}

\subsection{Correctness and analysis of algorithm}
In this section, we prove that {\ssync}\_$\mathcal{EXPO}$ solves exploration under the {\ssync} scheduler with move complexity $O(k\hat{D})$ and memory requirement $O(\max\{\log n, \log p\})$ per agent. Furthermore, every agent eventually becomes aware that exploration has been completed. We have the following observation as per our algorithm. 

\begin{observation}\label{obs:upper_p}
In the algorithm, the value of $\mathcal{A}.p$ for any agent $\mathcal{A}$ increases only if $\mathcal{A}$ receives information from an agent $\mathcal{B}$ with $\mathcal{B}.p > \mathcal{A}.p$, or if $\mathcal{A}$ observes more than $\mathcal{A}.p$ inactive agents in the current round. Since the adversary can deactivate at most $p$ agents in any round, it follows that $\mathcal{A}.p \le p$ for every agent $\mathcal{A}$.
\end{observation}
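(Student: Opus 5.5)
The plan is to prove, by induction on the round number $r$, the following invariant: at the start of every round, every agent $\mathcal{A}$, active or inactive, satisfies $\mathcal{A}.p \le p$. Inactive agents do not execute any phase, so their stored value does not change. It therefore suffices to show that any active agent that changes $\mathcal{A}.p$ in round $r$ assigns it a value at most $p$.

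\emph{Base case.} Initially $\mathcal{A}.p = 0 \le p$ for every agent.

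\emph{Inspecting the algorithm.} The only assignments to $\mathcal{A}.p$ occur in Phase~3, Case~1 and Case~2. Both have the form $\mathcal{A}.p = \max\{\mathcal{X}, \mathcal{Y}\}$; no other step modifies $\mathcal{A}.p$. So I bound $\mathcal{Y}$ and $\mathcal{X}$ separately.

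\emph{Bounding $\mathcal{Y}$.} $\mathcal{Y}$ is the maximum of $b_j^i.p$ over the agents that broadcast in round $r$. Each broadcast value is the value stored at the start of round $r$, so by the induction hypothesis $\mathcal{Y} \le p$.

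\emph{Bounding $\mathcal{X}$.} I will show that $\mathcal{X}$ counts pairwise distinct agents, each of which is genuinely inactive in round $r$. Since the adversary deactivates at most $p$ agents per round, this gives $\mathcal{X} \le p$. The argument has three steps.
\begin{itemize}
\item For a node $v_i$ hosting an active agent, $ID(v_i)$ is the full set of agents at $v_i$. Under global communication, every active agent at $v_i$ broadcasts, so $\ell_i$ is exactly the number of active agents there. Hence $x_i = |ID(v_i)| - \ell_i$ counts exactly the inactive agents at $v_i$.
\item A node $w_j$ is classified as containing only inactive agents when $ID(w_j) \neq \emptyset$ appears in some received view but no agent of $ID(w_j)$ broadcasts. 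Because every active agent broadcasts under global communication, every agent of $ID(w_j)$ must indeed be inactive.
\item Every agent occupies exactly one node, and the $w_j$ are distinct nodes that are also distinct from all the $v_i$ (the $v_i$ host an active agent, the $w_j$ host none). So the sets counted in $\mathcal{X}$ are pairwise disjoint. I will make sure each $w_j$ is counted once even if it appears in several views, by taking $\{w_1,\dots,w_\tau\}$ as a set of distinct nodes, identified by their agent-ID sets.
\end{itemize}

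The main point needing care is this third step: ruling out double counting, and confirming that ``no broadcast from $w$'' really implies ``all agents at $w$ are inactive.'' Both follow from the reliability of global communication and the fact that an agent is located at a single node in a round. Combining the bounds gives $\max\{\mathcal{X},\mathcal{Y}\} \le p$, which completes the induction.
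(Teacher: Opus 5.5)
Your proof is correct and uses the same reasoning the paper compresses into the observation: $\mathcal{A}.p$ is only ever set to $\max\{\mathcal{X},\mathcal{Y}\}$, and both terms are bounded by $p$. You bound $\mathcal{Y}$ by induction and $\mathcal{X}$ by the number of genuinely inactive agents, and you make explicit the induction and the disjointness/no-double-counting check that the paper leaves implicit. The first clause of the observation (``increases only if'') also follows directly from this update rule, since $\max\{\mathcal{X},\mathcal{Y}\} > \mathcal{A}.p$ forces $\mathcal{Y} > \mathcal{A}.p$ or $\mathcal{X} > \mathcal{A}.p$.
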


\begin{lemma}\label{lm:info_spreading}
Let $A_r$ denote the set of agents active in round $r$, and let the adversary deactivate at most $p = \left\lfloor \frac{k-1}{n-2} \right\rfloor - 1$ agents per round. Suppose that at round $r_1$, all agents in $A_{r_1}$ agree on some information $\mathcal{I}$ of size $O(\log n)$. Then, there exists at least one agent that is active in both rounds $r_1$ and $r_1+1$, that is, $A_r \cap A_{r+1} \neq \emptyset $. 
\end{lemma}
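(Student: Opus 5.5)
The claim follows from a counting argument: in any round at most $p$ agents are inactive, so at least $k-p$ agents are active. Then $|A_{r_1}| \ge k-p$ and $|A_{r_1+1}| \ge k-p$. If the two sets were disjoint, the total number of agents would satisfy
\[
k \;\ge\; |A_{r_1}| + |A_{r_1+1}| \;\ge\; 2(k-p),
\]
that is, $k \le 2p$. So the plan is to show that $k > 2p$ whenever $p = \left\lfloor \frac{k-1}{n-2} \right\rfloor - 1$, and to derive $A_{r_1}\cap A_{r_1+1}\neq\emptyset$ by contradiction from this.

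To bound $p$, use $p+1 \le \frac{k-1}{n-2}$, i.e.\ $k \ge (p+1)(n-2)+1$, which is exactly the threshold of Observation~\ref{obs:necessary}. For $n \ge 4$ we get $k \ge 2(p+1)+1 = 2p+3 > 2p$, which closes the argument. For $n = 3$ we only get $k \ge p+2$, which does not imply $k>2p$. This small case is where I expect the only real care to be needed. The cleanest remedy is to take $n \ge 4$, which is already implicit in the algorithmic section because the lower bound of Theorem~\ref{thm:imp-1-hop} needs $n\ge 9$. Alternatively, one can note that for $n=3$ the quantity $p$ is defined only through $k$, and restrict attention to the regime in which the algorithm is claimed. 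I would state the assumption $n\ge 4$ explicitly. The degenerate case $p=0$ is trivial, since then every agent is active in both rounds.

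The hypothesis that the agents in $A_{r_1}$ agree on information $\mathcal{I}$ of size $O(\log n)$ plays no role in the intersection claim itself. It only explains what the lemma is used for: an agent in $A_{r_1}\cap A_{r_1+1}$ still stores $\mathcal{I}$ in round $r_1+1$ and, by global communication, rebroadcasts it to every agent in $A_{r_1+1}$. I would add this one-sentence consequence after the counting: by induction on $r \ge r_1$, the information $\mathcal{I}$ is known to all agents in $A_r$, so it is never lost under the \ssync\ scheduler. This is how the lemma will be applied to the agreed values $\mathcal{A}.\text{phase}$ and $\mathcal{A}.\text{srcID}$ later.
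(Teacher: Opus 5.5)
Your proof is correct and follows essentially the same route as the paper: both use that at most $p$ agents are inactive per round, so $|A_{r_1}\cap A_{r_1+1}|\ge k-2p$. Both then use $k\ge (n-2)(p+1)+1$, which you rightly derive directly from the floor definition of $p$ rather than citing Section~\ref{sec:imp}, to get $k-2p\ge (n-4)p+n-1>0$ for $n\ge 4$. Flagging $n\ge 4$ explicitly is warranted: the paper invokes it without having stated it, and for $n=3$, $k=4$, $p=2$ the two active sets can indeed be disjoint.
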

\begin{proof}
The adversary can deactivate at most $p = \left\lfloor \frac{k-1}{n-2} \right\rfloor - 1$ agents per round, where $k$ is the total number of agents and $n$ is the number of nodes of $G$. As shown in Section~\ref{sec:imp}, this implies $k \ge (n-2)(p+1)+1$. Consider a round $r_1$ in which at most $p$ agents are inactive; hence, the number of active agents is $|A_{r_1}| = k-p$. Assume that all agents active in round $r_1$ agree on some information $\mathcal{I}$ of size $O(\log n)$. In round $r_1+1$, the adversary may deactivate at most $p$ agents from $A_{r_1}$, so at least $k-2p$ agents retain the information $\mathcal{I}$ at round $r_1+1$. Using the lower bound on $k$, we have $k-2p \ge (n-2)(p+1)+1-2p = np+n-4p-1$. Since $n \ge 4$ (considered in the begging of Section \ref{sec:ssync_algo}), this value is positive, which implies that $A_{r_1} \cap A_{r_1+1} \neq \emptyset$. Therefore, at least one agent is active in both rounds $r_1$ and $r_1+1$. This completes the proof.
\end{proof}
Based on Lemma \ref{lm:info_spreading}, we have the following observation.
\begin{observation}\label{obs:exp}
Let a non-empty set of agents set its $\text{EXP}$ (resp. \text{phase}) parameter to $1$ at round $r_1$. Then, for every round $r \ge r_1+1$, $\exists$ at least one agent $\mathcal{B}$ with $\mathcal{B}.\text{EXP} = 1$ (resp. $\mathcal{B}.\text{phase} = 1$) at round $r$. Also, for every agent $\mathcal{A}$ and every round $r \ge 0$, the value of $\mathcal{A}.p$ is non-decreasing over time.
\end{observation}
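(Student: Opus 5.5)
Observation~\ref{obs:exp} follows by combining Lemma~\ref{lm:info_spreading} with the update rules of Phase~3; I would prove it by induction on $r\ge r_1+1$.

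\emph{Flags $\text{EXP}$ and $\text{phase}$.} The first step is to check that no rule in {\ssync}\_$\mathcal{EXPO}$ ever resets $\mathcal{A}.\text{EXP}$ or $\mathcal{A}.\text{phase}$ from $1$ to $0$. This is a direct inspection: every assignment to these variables writes the value $1$.

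Next, I would note that an agent that is inactive in a round does not execute the cycle, so its memory is unchanged. Hence once an agent holds $\text{EXP}=1$ (resp.\ $\text{phase}=1$), it keeps that value forever.

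It then suffices to show that some agent holds the flag in round $r_1+1$. A non-empty set of agents that are active in round $r_1$ set the flag to $1$. Their values are broadcast globally, and Phase~3 copies the flag: any active agent seeing a $b_j^i$ with $b_j^i.\text{EXP}=1$ sets its own flag, and Case~2 copies $\text{phase}$. So all agents in $A_{r_1}$ agree on the $O(1)$-size information ``flag $=1$''.

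Lemma~\ref{lm:info_spreading} now gives an agent in $A_{r_1}\cap A_{r_1+1}$ holding the flag. In fact, persistence of memory alone already guarantees that the original setter still holds it. The induction step is the same persistence argument, and it yields the claim for every $r\ge r_1+1$.

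\emph{Monotonicity of $\mathcal{A}.p$.} The only assignments to $\mathcal{A}.p$ are $\mathcal{A}.p=\max\{\mathcal{X},\mathcal{Y}\}$. Here $\mathcal{Y}$ is the maximum $p$-value over the active agents, and this includes $\mathcal{A}$ itself, since $\mathcal{A}$ broadcasts $\text{Info}(\mathcal{A})$ containing $\mathcal{A}.p$. Therefore $\mathcal{Y}\ge \mathcal{A}.p$, so the new value is at least the old one. In rounds where $\mathcal{A}$ is inactive the value is unchanged.

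The main subtlety I expect is the first step of the flag argument. Case~1 Sub-case~2 and Case~2 update $\text{phase}$ and $\text{EXP}$ in different branches, so I need to verify carefully that no branch overwrites a $1$ with $0$. In particular, the $\text{srcID}$ copying in Case~2 must not alter $\text{phase}$. Everything else is routine.
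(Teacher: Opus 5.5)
Your proof is correct, but it takes a different route from the one the paper indicates. The paper gives no separate argument and derives the observation from Lemma~\ref{lm:info_spreading}: some agent active in round $r_1$ is active again in round $r_1+1$ and carries the flag forward. Judging from how the observation is used in the proof of Theorem~\ref{thm:final}, the intent is that an \emph{active} agent holds the flag in later rounds, so that newly activated agents can copy $\text{EXP}$ and $\text{phase}$ through global communication.

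You instead rely on two invariants read directly off the code. No assignment ever writes $0$ to $\text{EXP}$ or $\text{phase}$, and an inactive agent's memory is frozen. Hence whoever set the flag in round $r_1$ holds it in every round $r\ge r_1+1$. This proves the statement exactly as written, with no counting and no appeal to $k\ge (n-2)(p+1)+1$. Your monotonicity argument for $\mathcal{A}.p$ also gives a justification the paper only asserts: the new value is $\max\{\mathcal{X},\mathcal{Y}\}$ with $\mathcal{Y}\ge \mathcal{A}.p$, because $\mathcal{A}$'s own broadcast enters the maximum, and $\mathcal{A}.p$ is untouched in the $\text{EXP}$-copy branch and in rounds where $\mathcal{A}$ is inactive. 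What persistence alone does not give is an active holder in each round after $r_1$. That stronger form is what the paper's overlap argument is aiming at.

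You should delete, rather than repair, the sentence claiming that all agents in $A_{r_1}$ agree on ``flag $=1$''. There are three problems with it:
\begin{itemize}
\item The copy rules of Phase~3 read the values broadcast in Phase~1, i.e.\ start-of-round values. A flag set during round $r_1$ is therefore invisible to every other agent in that round.
\item The hypothesis only supplies a non-empty set of setters.
\item For $\text{EXP}$, some branches are reached only by the minimum active agent at each node (the no-hole branch of Case~1, Sub-case~1).
\end{itemize}
So the hypothesis of Lemma~\ref{lm:info_spreading} is not available as you state it. Your proof survives because the induction actually runs on persistence, but the detour through the lemma is unnecessary and, as written, unsupported. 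Note that this same premise is exactly what the paper's route would need in order to produce an active holder.
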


\begin{lemma}\label{lm:twohole}
If there are at least two holes in round $r$, then there exists a node that contains at least $p+2$ agents in round $r$. 
\end{lemma}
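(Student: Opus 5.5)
This is a pigeonhole argument by contradiction, using the standing assumption of Section~\ref{sec:ssync_algo} that $p=\left\lfloor \frac{k-1}{n-2}\right\rfloor-1$. This gives $p+1\le \frac{k-1}{n-2}$, so $k\ge (n-2)(p+1)+1$, as already noted in the proof of Lemma~\ref{lm:info_spreading}. That lower bound on $k$ is the only ingredient needed.

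Suppose that in round $r$ there are at least two holes and every node contains at most $p+1$ agents. Then all $k$ agents lie on at most $n-2$ nodes, each holding at most $p+1$ agents. The total number of agents is therefore at most $(n-2)(p+1)$. Combined with $k\ge (n-2)(p+1)+1$, this is a contradiction. Hence some node holds at least $p+2$ agents.

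The steps are: (i) derive $k\ge (n-2)(p+1)+1$ from the definition of $p$, using $\lfloor x\rfloor\le x$; (ii) count the occupied nodes, of which there are at most $n-2$; (iii) apply pigeonhole. There is no real obstacle. The only care needed is in step (i): to pass from the floor expression to the inequality on $k$, use $\left\lfloor \frac{k-1}{n-2}\right\rfloor \le \frac{k-1}{n-2}$, which relies on $n\ge 3$ so that $n-2>0$.
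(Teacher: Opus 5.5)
Your argument is correct and is the same pigeonhole count the paper uses: at most $n-2$ occupied nodes, each with at most $p+1$ agents, give at most $(n-2)(p+1)<k$ agents. The only difference is how you justify $k\ge (n-2)(p+1)+1$: you derive it directly from $p=\left\lfloor \frac{k-1}{n-2}\right\rfloor-1$ via $\lfloor x\rfloor\le x$, whereas the paper recalls it from the impossibility results of Section~\ref{sec:imp}, so your justification is the more self-contained one.
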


\begin{proof}
Let $h$ denote the number of holes in round $r$. Since $h \ge 2$, agents can occupy at most $n-2$ nodes. The adversary can deactivate at most $p$ agents per round, where $k$ is the total number of agents and $n$ is the number of nodes of $G$. Suppose, for contradiction, that no node contains $p+2$ agents. Then each occupied node contains at most $p+1$ agents. As agents occupy at most $n-2$ nodes, the total number of agents is at most $(n-2)(p+1)$ which contradicts the fact that $k$ agents can solve the exploration only if $k\geq (n-2)(p+1)+1$ (recall from Section \ref{sec:imp}). Therefore, there exists at least one node that contains at least $p+2$ agents in round $r$. This completes the proof.
\end{proof}

Based on lemma \ref{lm:twohole}, we have the following observation.

\begin{observation}\label{obs:neg2hole}
Let $\mathcal{A}$ be an active agent in round $r$, and suppose that there are at least two holes in $\mathcal{G}_r$. Since $\mathcal{A}.p \le p$ by Observation~\ref{obs:upper_p}, Lemma~\ref{lm:twohole} implies that there exists a node $v$ in $\mathcal{G}_r$ containing at least $\mathcal{A}.p+2$ agents. Consequently, if in round $r$ agents observe that no node contains at least $\mathcal{A}.p+2$ agents, they can correctly conclude that there is at most one hole in $\mathcal{G}_r$.
\end{observation}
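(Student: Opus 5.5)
The statement to prove is Observation~\ref{obs:neg2hole}. It is essentially a contrapositive repackaging of Lemma~\ref{lm:twohole} combined with Observation~\ref{obs:upper_p}, so the plan is a short two-step argument rather than a new construction.

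\emph{First step: from two holes to a node with $\mathcal{A}.p+2$ agents.} Fix an active agent $\mathcal{A}$ in round $r$ and suppose $\mathcal{G}_r$ has at least two holes. Lemma~\ref{lm:twohole}, which needs only the counting bound $k \ge (n-2)(p+1)+1$, gives a node $v$ with at least $p+2$ agents. Observation~\ref{obs:upper_p} gives $\mathcal{A}.p \le p$, so $v$ has at least $p+2 \ge \mathcal{A}.p+2$ agents. The one point to check is that $\mathcal{A}.p$ is the value in use in round $r$, i.e., after the update $\mathcal{A}.p=\max\{\mathcal{X},\mathcal{Y}\}$. That update still satisfies the bound, because $\mathcal{X}$ counts agents observed as inactive in round $r$ (at most $p$), and $\mathcal{Y}$ is a maximum of earlier $p$-values, each at most $p$ by induction.

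\emph{Second step: the contrapositive, and the main obstacle.} The second sentence of the observation says that if agents observe no node with at least $\mathcal{A}.p+2$ agents, they may conclude there is at most one hole. This is the contrapositive of the first step, but only if the observation is faithful. The issue is that agents see only nodes in $G'$ (nodes hosting an active agent) and their 1-hop neighbourhoods, so the node $v$ produced by Lemma~\ref{lm:twohole} might not appear in $G'$.

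I would first settle the observation exactly as phrased, reading ``agents observe'' as a statement about $\mathcal{G}_r$: if no node of $\mathcal{G}_r$ has at least $\mathcal{A}.p+2$ agents, then there are not two holes, by the first step. To connect this to what the algorithm actually sees, I would then argue that $v$ must in fact be visible. The node $v$ holds at least $p+2 > p$ agents, and at most $p$ agents are deactivated per round, so at least one agent at $v$ is active. Hence $v$ is a node of $G'$, its agent count $|ID(v)|$ is broadcast, and every active agent sees it.

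Combining the two steps: whenever there are at least two holes, a node with at least $\mathcal{A}.p+2$ agents is present in $G'$. Equivalently, if $G'$ contains no such node, there is at most one hole.
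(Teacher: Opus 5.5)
Your proposal is correct and follows the paper's argument: combine Lemma~\ref{lm:twohole} with $\mathcal{A}.p \le p$ from Observation~\ref{obs:upper_p}, then take the contrapositive. Your extra step makes the ``agents observe'' part rigorous, which the paper leaves implicit: a node with at least $p+2$ agents must host an active agent, because at most $p$ agents are deactivated, so it lies in $G'$ and every active agent sees it by Lemma~\ref{lm:map}.
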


We prove that, in round $r$, every active agent possesses complete information about $CCA(\mathcal{G}_r)$.

\begin{lemma}\label{lm:map}
Let $\mathcal{A}$ be an active agent at node $v$ of $CCA(\mathcal{G}_r)$. Then, at the end of Phase 2, agent $\mathcal{A}$ constructs a labeled graph isomorphic to $CCA(\mathcal{G}_r)$. Moreover, for every node in this reconstructed graph, the agent correctly identifies (i) which incident ports lead to holes and (ii) which incident ports lead to nodes occupied exclusively by inactive agents, along with the IDs of such inactive agents. 
\end{lemma}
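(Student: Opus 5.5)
Since the agent $\mathcal{A}$ lies at a node of some component of $CCA(\mathcal{G}_r)$, I read the statement as asserting that $\mathcal{A}$ reconstructs every component of $CCA(\mathcal{G}_r)$. The proof would go directly through the Phase~1 broadcasts and the Phase~2 construction. First I fix the bijection between vertices. Let $v_1,\dots,v_\ell$ be the nodes hosting at least one active agent. By global communication, every message $(\text{Info}(b^i_j),C_{v_i})$ broadcast in Phase~1 of round $r$ reaches $\mathcal{A}$. Each $C_{v_i}$ contains $ID(v_i)$. Agent IDs are unique and each agent sits at exactly one node, so the sets $ID(v_i)$ are pairwise disjoint. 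Hence $\mathcal{A}$ can group the messages by node: two messages come from the same node exactly when their $ID(\cdot)$ components coincide. From each group it takes the minimum sender ID, and this agent is active and belongs to $ID(v_i)$. The map $v_i\mapsto ID_{v_i}=b^i_1.ID$ is therefore well defined and injective, and its image is exactly $V'$.

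Next I show edge correspondence. Take an edge $\{v_i,v_j\}\in E(r)$ with both endpoints hosting an active agent. Since $\mathcal{G}_r$ is a simple port-labelled graph, there is a unique port $q_1=\pi(v_i,v_j)$, and $C_{v_i}^{q_1}=(ID(v_i),q_1,ID(v_j))$ is a broadcast tuple. Symmetrically, $(ID(v_j),q_2,ID(v_i))$ is broadcast. So $\mathcal{A}$ adds the edge $(ID_{v_i},ID_{v_j})$ with the correct port labels. Conversely, suppose both matching tuples are received. Then $ID(v_j)$ appears as the set seen through port $q_1$ of $v_i$, so the neighbour via $q_1$ is the unique node whose agent set is $ID(v_j)$. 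That node hosts an active agent, so it is $v_j$ itself, and the added edge is real. One subtlety must be checked: an agent at a node $u$ in $V'$ appears in $u$'s own message and is never confused with a neighbouring set, because the sets are disjoint and nonempty. Thus $G'$, together with its port labels, is isomorphic to the subgraph of $\mathcal{G}_r$ induced by $\{v_1,\dots,v_\ell\}$. By Definition~\ref{def:1}, the connected components of this induced subgraph are exactly the members of $CCA(\mathcal{G}_r)$.

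Finally I handle the port annotations. Consider a port $q$ at $v_i$. If its neighbour $u$ is a hole, then $C^q_{v_i}=(ID(v_i),q,\emptyset)$, and $\mathcal{A}$ marks it as leading to a hole. This test is exact, because the tuple is empty if and only if $ID(u)=\emptyset$, which holds by full $1$-hop visibility. If $u$ is nonempty but none of its agents broadcasts, then every agent at $u$ is inactive: an active agent would execute Phase~1 and broadcast a view containing $ID(u)$. In that case $\mathcal{A}$ marks the port accordingly and records $ID(u)$, which is the correct set of inactive IDs. If some agent at $u$ does broadcast, then $u\in\{v_1,\dots,v_\ell\}$ and the port is an edge of $G'$, so it is not misclassified.

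The main obstacle is justifying the step ``no message received from $u$ implies all agents at $u$ are inactive.'' This is the only point where a communication assumption is used. It needs global communication to be reliable within the round, and it needs every active agent to perform Phase~1 unconditionally, which I would state explicitly from the algorithm description.
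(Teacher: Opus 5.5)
Your proof is correct and follows the same route as the paper. Both arguments trace each edge, hole-port and inactive-port of $CCA(\mathcal{G}_r)$ back to the corresponding Phase~1 tuples and check that Phase~2 records it with the right port labels and ID sets. Beyond what the paper writes, you also establish the vertex bijection via disjointness of the sets $ID(v_i)$. You further prove the converse direction: no spurious edges are added and no port to an actively occupied node is misclassified. The paper's proof leaves both of these implicit.
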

\begin{proof}
Let $(w_1, w_2)$ be an edge in $CCA(\mathcal{G}_r)$, and let $\pi(w_1, w_2) = q_1$ and $\pi(w_2, w_1) = q_2$. Since $(w_1, w_2)$ belongs to $CCA(\mathcal{G}_r)$, both $w_1$ and $w_2$ contain at least one active agent due to Def. \ref{def:1}. Let $ID_{w_1}$ and $ID_{w_2}$ denote the minimum identifiers among the active agents located at $w_1$ and $w_2$, respectively. Then agent $\mathcal{A}$ receives the 1-hop views $C_{w_1}$ and $C_{w_2}$, and hence $ID_{w_1}, ID_{w_2} \in V'$ as per Phase 2. Because $\pi(w_1, w_2) = q_1$ and $\pi(w_2, w_1) = q_2$, we have that $C_{w_1}^{q_1} \in C_{w_1}$ and $C_{w_2}^{q_2} \in C_{w_2}$. In Phase 2, these tuples are $C_{w_1}^{q_1} = (ID(w_1), q_1, ID(w_2))$ and $C_{w_2}^{q_2} = (ID(w_2), q_2, ID(w_1))$. Therefore, agent $\mathcal{A}$ adds an undirected edge $(ID_{w_1}, ID_{w_2})$ to $E'$ with port labels $\pi(ID_{w_1}, ID_{w_2}) = q_1$ and $\pi(ID_{w_2}, ID_{w_1}) = q_2$. Hence, every edge of $CCA(\mathcal{G}_r)$ is correctly reconstructed.

Now consider a node $w_1 \in CCA(\mathcal{G}_r)$ such that one of its ports, say $q_1$, leads to a node $w_2$ containing only inactive agents. Since $w_1 \in CCA(\mathcal{G}_r)$, it contains at least one active agent, and thus agent $\mathcal{A}$ receives $C_{w_1}$. Since no agent is active at $w_2$, no view $C_{w_2}$ is received. As port $q_1$ of $w_1$ leads to $w_2$, we have $C_{w_1}^{q_1} = (ID(w_1), q_1, ID(w_2))$. By the rules of Phase 2, $ID_{w_1} \in V'$, where $ID_{w_1}$ is the minimum active identifier at node $w_1$, and agent $\mathcal{A}$ correctly records that port $q_1$ of node $ID_{w_1}$ leads to a node with only inactive agents and it marks port $q_1$ with set $ID(w_2)$.

Next, consider a node $w_1 \in CCA(\mathcal{G}_r)$ such that one of its ports, say $q_1$, leads to a hole $w_2$. Since $w_1 \in CCA(\mathcal{G}_r)$, it contains at least one active agent, and hence $C_{w_1}$ is received. As no agent is present at $w_2$, no view from $w_2$ is received. Since port $q_1$ of $w_1$ leads to $w_2$, we have $C_{w_1}^{q_1} = (ID(w_1), q_1, \emptyset)$. By the rules of Phase 2, $ID_{w_1} \in V'$, and agent $\mathcal{A}$ correctly records that port $q_1$ of node $ID_{w_1}$ leads to a hole. This completes the proof.
\end{proof}

\begin{lemma}\label{lm:pipeline}
If in round $r$ there exists a node in $\mathcal{G}_r$ that contains at least $\mathcal{A}.p+2$ agents, then at the end of round $r$ either a hole is filled or a node containing only inactive agents receives at least one active agent, unless $\mathcal{G}_r$ contains no hole and no node consisting only of inactive agents, or there already exists an active agent $\mathcal{B}$ with $\mathcal{B}.\text{EXP} = 1$ (or $\mathcal{B}.\text{phase} = 1$) in round $r$. 
\end{lemma}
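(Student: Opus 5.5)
We argue directly from the rules of Phase 3, Case 1, Sub-case 1. Assume no active agent has $\text{EXP}=1$ or $\text{phase}=1$ in round $r$. Assume also that $\mathcal{G}_r$ contains either a hole or a node with only inactive agents. By Lemma~\ref{lm:map}, every active agent reconstructs the same labeled graph $G'$, isomorphic to $CCA(\mathcal{G}_r)$, together with the marks on ports leading to holes and to inactive-only nodes. Every active agent also computes the same values $\mathcal{X}$ and $\mathcal{Y}$, and hence the same estimate $\mathcal{A}.p=\max\{\mathcal{X},\mathcal{Y}\}$. Since all ingredients are common knowledge, the quantities chosen in Sub-case 1 are computed identically by every active agent. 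These are the node $\overline{w}_1$ whose minimum active agent has minimum ID among nodes with at least $\mathcal{A}.p+2$ agents, its component $H=G''$, the target $\overline{w}'_1$, and the lexicographically smallest shortest path $P$.

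First we check that the source $\overline{w}_1$ lies in $G'$, i.e., that some node holding at least $\mathcal{A}.p+2$ agents contains an active agent. Let $u$ be such a node. The number of inactive agents at $u$ is at most the total number of inactive agents observed, so if all of them were inactive we would get $\mathcal{X}\ge \mathcal{A}.p+2$. We argue that this contradicts the update $\mathcal{A}.p\ge\mathcal{X}$. The subtle point is that $u$ may be invisible, i.e., not adjacent to any active node. We would handle this by noting that the hypothesis refers to nodes of $G'$, as in the algorithm's Sub-case 1 test. Alternatively, since the adversary deactivates at most $p$ agents, any node with at least $p+1$ agents contains an active agent; combined with $\mathcal{A}.p\le p$ (Observation~\ref{obs:upper_p}), a node with $\mathcal{A}.p+2\ge$ well, at least $p+1$ agents suffices when $\mathcal{A}.p\ge p-1$. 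I expect this to be the delicate point, and would resolve it by reading the hypothesis as the algorithm's condition on $G'$.

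Next we treat the target. Within $G''$ there is either a port to a hole or a port to an inactive-only node. We must ensure such a port exists in the component containing $\overline{w}_1$, not merely somewhere in $\mathcal{G}_r$. Since $\mathcal{G}_r$ is connected, take a shortest path in $\mathcal{G}_r$ from $\overline{w}_1$ to a hole or inactive-only node. All intermediate nodes carry active agents, so the node preceding the target lies in $H$ and has the required marked port. The algorithm first looks for holes in $G''$ and otherwise for inactive-only neighbours. If $G''$ has neither, then $\mathcal{G}_r$ has neither, and this is exactly the excluded case (where $\text{EXP}$ is set).

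Finally we verify the pipeline along $P=\overline{u}_1\sim\cdots\sim\overline{u}_\lambda$. Each $\overline{u}_j$ lies in $G''$, so it contains an active agent, and its minimum active agent is the one that executes the move. The minimum active agent at $\overline{u}_j$ moves to $\overline{u}_{j+1}$ for $j<\lambda$. The minimum active agent at $\overline{u}_\lambda$ moves through the minimum marked port to the hole (or inactive-only node) $z$. All other agents stay. Consequently $z$ receives an active agent, which fills the hole or adds an active agent to the inactive-only node. Each intermediate node loses one agent and gains one, so no new hole is created. The source $\overline{w}_1$ loses exactly one agent, and since it held at least $\mathcal{A}.p+2\ge 2$ agents it stays occupied. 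Because all agents use the same $G'$, there are no conflicting decisions. The only genuine obstacle is the visibility/activeness of the heavy node discussed in the second paragraph; the rest is bookkeeping.
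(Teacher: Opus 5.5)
Your proposal is correct and takes essentially the paper's route: all active agents build the same $G'$ by Lemma~\ref{lm:map}, agree on $\overline{w}_1$, $H$ and the lexicographically smallest shortest path, and pipeline one agent into a hole or, failing that, into an inactive-only node. Two of your justifications are tidier than the paper's: why $H$ must have such a port (take the target nearest to $\overline{w}_1$, so every intermediate node holds an active agent), which the paper gets from Definition~\ref{def:1}; and why all agents share $\mathcal{A}.p$ (they see identical $\mathcal{X},\mathcal{Y}$), which the paper gets from Observation~\ref{obs:upper_p}. On the point you flag, the paper's proof also silently takes $\overline{w}_1$ among the nodes of $G'$, so reading the hypothesis as the Sub-case~1 test is the right resolution; it is genuinely needed, since an all-inactive node with $\mathcal{A}.p+2\le p$ agents whose neighbours are all holes is invisible, and if no node of $G'$ is heavy the algorithm enters Sub-case~2 and nobody moves. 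Your ``alternatively'' remark only covers $\mathcal{A}.p\ge p-1$ and should be dropped.
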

\begin{proof}
 As per {\ssync}\_$\mathcal{EXPO}$, the value of $\mathcal{A}.p=\max\{\mathcal{X}, \mathcal{Y}\}$, where $\mathcal{X}$ represents the number of inactive agents at round $r$ which active agents can observe, and $\mathcal{Y}$ is the maximum $p$ value among all active agents at round $r$. Due to Observation \ref{obs:upper_p}, the value of $\mathcal{A}.p$ is the same as $\mathcal{B}.p$ for any two active agents $\mathcal{A}$ and $\mathcal{B}$ at round $r$. Due to Lemma \ref{lm:map}, all active agents form a map of $CCA(\mathcal{G}_r)$ (i.e., $G'$) including information which port(s) lead to a hole or port(s) lead to a node which has only inactive agents. As per {\ssync}\_$\mathcal{EXPO}$, agents do the following. Let $\overline{w}_1, \overline{w}_2, \ldots, \overline{w}_{\lambda_1}$ be the nodes in $G'$ that have at least $\mathcal{A}.p + 2$ agents. Let $\overline{b}_j$ be the minimum ID active agent at node $\overline{w}_j$. Without loss of generality, let $\overline{b}_1.ID = \min\{ \overline{b}_j.ID \mid j \in [1,\lambda_1] \}$. Consider the active connected component (say $H$) of $G'$ where node $\overline{w}_1$ is present.

 If there is a port in $H$ that leads to a hole, then all active agents agree on the path $P$ in $H$ that leads to the hole. Let $P = \overline{u}_1(=\overline{w}_1) \sim \overline{u}_2 \sim \ldots \sim \overline{u}_{\lambda}$ be a path such that one of the ports of node $\overline{u}_{\lambda}$ leads to a hole. Since path $P$ is part of $H\subseteq CCA(\mathcal{G}_r)$, at least one active agent is present at each node $\overline{u}_j$ for every $j\in [1,\lambda]$. As per the algorithmic steps, the minimum active agent at node $\overline{u}_j$ for every $j<\lambda$ moves to node $\overline{u}_{j+1}$. And the minimum active agent at the node $\overline{u}_\lambda$ moves the minimum available port, which leads to a hole. 

 Else, if no port in $H$ leads to a hole, and $\mathcal{G}_r$ contains either a hole or a node consisting only of inactive agents, Definition~\ref{def:1} implies that there exists a port in $H$ leading to a node containing only inactive agents. If there is a port in $H$ that leads to a node with only inactive agents, then all active agents agree on the path $P$ in $H$ that leads to that node as per {\ssync}\_$\mathcal{EXPO}$. Let $P = \overline{u}_1(=\overline{w}_1) \sim \overline{u}_2 \sim \ldots \sim \overline{u}_{\lambda}$ be a path such that one of the ports of node $\overline{u}_{\lambda}$ leads to a node which has only inactive agents. Since path $P$ is part of $H\subseteq CCA(\mathcal{G}_r)$, at least one active agent is present at each node $\overline{u}_j$ for every $j\in [1,\lambda]$. As per the algorithmic steps, the minimum active agent at node $\overline{u}_j$ for every $j<\lambda$ moves to node $\overline{u}_{j+1}$. And the minimum active agent at the node $\overline{u}_\lambda$ moves via the minimum available port, which leads to a node that has only inactive agents present. Therefore, at the end of round $r$, one of the nodes which has only inactive agents gets one agent from one of the nodes of $G'$. This completes the proof.
\end{proof}

We now present a key lemma on which the correctness of the algorithm relies.

\begin{lemma}\label{lm:key}
Let at round $r$ an active agent $\mathcal{A}$ observe that no node in $\mathcal{G}_r$ contains at least $\mathcal{A}.p+2$ agents. If $\mathcal{G}_r$ contains a hole, then there is at most one node in $\mathcal{G}_r$ containing only inactive agents.
\end{lemma}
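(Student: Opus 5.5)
We argue by counting agents, in the same way as Lemma~\ref{lm:twohole}. Suppose that in round $r$ there is a hole $h$, that no node contains $\mathcal{A}.p+2$ or more agents, and, for contradiction, that there are two distinct nodes $w_1,w_2$ each containing only inactive agents. Let $c_1,c_2\ge 1$ be their agent counts.

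\emph{Step 1: the inactive agents at $w_1,w_2$ are counted in $\mathcal{A}.p$.} We first show that both $w_1$ and $w_2$ are visible to some active agent. Since $\mathcal{G}_r$ is connected, each $w_j$ has a neighbour. If some neighbour hosts an active agent, then $w_j$ appears in a broadcast 1-hop view, and by Lemma~\ref{lm:map} all of $ID(w_j)$ contributes to $\mathcal{X}$. This visibility claim is the delicate point (see the last paragraph). Granting it, $\mathcal{X}\ge c_1+c_2$. Because $\mathcal{A}.p=\max\{\mathcal{X},\mathcal{Y}\}\ge \mathcal{X}$, we get $\mathcal{A}.p\ge c_1+c_2$.

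\emph{Step 2: counting.} By hypothesis every node holds at most $\mathcal{A}.p+1$ agents. The hole $h$ holds $0$, and $w_1,w_2$ hold $c_1+c_2\le \mathcal{A}.p$ agents together. The remaining $n-3$ nodes hold at most $(n-3)(\mathcal{A}.p+1)$. Hence
\[
k\le (n-3)(\mathcal{A}.p+1)+\mathcal{A}.p \le (n-2)(p+1)-1,
\]
using $\mathcal{A}.p\le p$ from Observation~\ref{obs:upper_p}. This contradicts $k\ge (n-2)(p+1)+1$ (Observation~\ref{obs:necessary}). Even without Step 1, the adversary deactivates at most $p$ agents, so $c_1+c_2\le p$. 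The same count then gives $k\le (n-3)(\mathcal{A}.p+1)+p$. This suffices only if $\mathcal{A}.p=p$, so Step 1 is needed to tie $c_1+c_2$ to $\mathcal{A}.p$ rather than to $p$.

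\emph{Main obstacle: the visibility claim in Step 1.} A node $w_j$ with only inactive agents might have neighbours that are all holes or all inactive-only nodes, and then it is not counted in $\mathcal{X}$. I would handle this by extending the counting. Let $U$ be the set of nodes that are holes or host only inactive agents and are not adjacent to any node with an active agent. Connectivity forces some node with active agents to be adjacent to the boundary of $U$, but the nodes deep inside $U$ remain unobserved. In that case I would fall back on the global bound: the total number of inactive agents is at most $p$. I would also use the fact that each hole or unobserved node contributes $0$ to the observed count, which leaves so few agents on at most $n-3$ nodes that some node must exceed $\mathcal{A}.p+1$.

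To make this rigorous, I would define $\mathcal{X}$ as the count of observed inactive agents and $Z$ as the number of unobserved inactive agents, with $\mathcal{X}+Z\le p$. Each node with an active agent holds at most $\mathcal{A}.p+1$ agents. I would then bound the number of nodes with active agents by $n-1-(\text{number of inactive-only nodes})$. The goal is to obtain $k\le (n-3)(\mathcal{A}.p+1)+\mathcal{X}+Z$, and then to show this is below $(n-2)(p+1)+1$ by splitting on whether $\mathcal{A}.p<p$. Making this last comparison work when $\mathcal{A}.p$ is much smaller than $p$ is the step I expect to be hardest. It may require showing that unobserved inactive nodes cannot exist alongside a hole under the stated hypothesis.
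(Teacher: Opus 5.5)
\textbf{There is a gap as written, but the fix is the fallback argument you already wrote down and then discarded.} Your ``even without Step~1'' count is a complete proof; the claim that it ``suffices only if $\mathcal{A}.p=p$'' is an error. The bound $k\le (n-3)(\mathcal{A}.p+1)+p$ is increasing in $\mathcal{A}.p$. Observation~\ref{obs:upper_p} gives $\mathcal{A}.p\le p$, so
\[
k\le (n-3)(p+1)+p=(n-2)(p+1)-1<(n-2)(p+1)+1
\]
for every value of $\mathcal{A}.p$. A smaller estimate $\mathcal{A}.p$ only lowers the upper bound and makes the contradiction stronger. You never need $c_1+c_2\le \mathcal{A}.p$, so Step~1 and the whole ``main obstacle'' paragraph are unnecessary.

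As submitted, however, the main line depends on Step~1, which you neither prove nor can derive from connectivity alone. Under the contradiction hypothesis, $w_1$ could be adjacent only to $h$ and $w_2$, so its agents never enter $\mathcal{X}$. Your last paragraph also ends without an argument, so the proposal is not a finished proof until you replace Step~1 with the monotonicity observation above.

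\textbf{Comparison with the paper.} The paper's proof is exactly this fallback count. Let $\beta$ be the number of inactive-only nodes. With a hole present, at most $n-\beta-1$ nodes host an active agent, each holding at most $\mathcal{A}.p+1$ agents. All inactive agents together number at most $p$. Hence
\[
(\mathcal{A}.p+1)(n-\beta-1)+p\ge (n-2)(p+1)+1,
\]
which forces $\beta\le 1$. The paper splits into the cases $\mathcal{A}.p=p$ and $\mathcal{A}.p<p$; that split is also unnecessary, by the same monotonicity.

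\textbf{A refinement for your count.} Put \emph{all} inactive-only nodes, not just $w_1,w_2$, into the single term $p$. Then the per-node bound $\mathcal{A}.p+1$ is needed only at nodes hosting an active agent. Those are exactly the nodes of $G'$, so this matches what an active agent can actually observe, rather than relying on a literal reading of ``no node in $\mathcal{G}_r$''.
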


\begin{proof}
   Let $\beta$ denote the number of nodes in $\mathcal{G}_r$ that contain only inactive agents, and let $w_1, w_2, \ldots, w_\beta$ be these nodes. Since the adversary can deactivate at most $p$ agents in any round, the total number of inactive agents satisfies $\sum_{j=1}^{\beta} ID(w_j) \le p$. As $\beta$ nodes contain only inactive agents and there is exactly one hole in $\mathcal{G}_r$, the number of nodes that contain at least one active agent is $n-\beta-1$. Since no node contains more than $\mathcal{A}.p+1$ agents at round $r$, the total number of agents is at most $(\mathcal{A}.p+1)(n-\beta-1)+p$. Therefore, the following inequality must hold:
\begin{equation}
(\mathcal{A}.p+1)(n-\beta-1)+p \ge (n-2)(p+1)+1 .
\label{eq:algo1}
\end{equation}

\vspace{0.15cm}
\noindent \underline{\textbf{Case 1 $\bm{\mathcal{A}.p = p}$}:}
From Eq.~\ref{eq:algo1}, we obtain $(p+1)(n-\beta-1)+p \ge (n-2)(p+1)+1 \;\Longrightarrow\; 
\beta \le \frac{2p}{p+1} \le 2 .$ Thus, $\beta \le 2$. We now show that $\beta = 2$ is impossible. Substituting $\mathcal{A}.p = p$ and $\beta = 2$ into Eq.~\ref{eq:algo1} gives $(p+1)(n-3)+p \ge (n-2)(p+1)+1 \;\Longrightarrow\; 2 \le 0 $ which is a contradiction. Hence, $\beta \le 1$.

\vspace{0.15cm}
\noindent \underline{\textbf{Case 2 $\bm{\mathcal{A}.p < p}$}:}
From Eq.~\ref{eq:algo1}, we have $(\mathcal{A}.p+1)(n-\beta-1)+p \ge (n-2)(p+1)+1 
\;\Longrightarrow\; 
\mathcal{A}.p (n-\beta-1)-\beta \ge np-3p$. Since $\mathcal{A}.p < p$, it follows that $p(n-\beta-1) -\beta > np-3p 
\;\Longrightarrow\; 
\beta < \frac{2p}{p+1} \;\Longrightarrow\; 
\beta < 2 .$ Therefore, $\beta \le 1$ holds in this case as well. This completes the proof.
\end{proof}

We now state the final result.

\begin{theorem}\label{thm:final}
{\ssync}\_$\mathcal{EXPO}$ solves exploration under the {\ssync} scheduler. 
The move complexity of the algorithm is $O(k\hat{D})$, and each agent requires $O(\max\{\log n, \log p\})$ memory. 
Furthermore, every agent eventually becomes aware that exploration has been achieved.
\end{theorem}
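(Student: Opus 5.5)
We organize the proof of Theorem~\ref{thm:final} around a potential argument over the two stages of {\ssync}\_$\mathcal{EXPO}$: the \emph{estimation stage}, in which no agent has $\text{phase}=1$, and the \emph{source stage}, which begins once some agent sets $\text{phase}=1$. First we handle agreement. By Lemma~\ref{lm:map}, every active agent reconstructs the same map $G'$ of $CCA(\mathcal{G}_r)$ in each round. Together with Observation~\ref{obs:upper_p}, all active agents then compute the same value $\max\{\mathcal{X},\mathcal{Y}\}$. Lemma~\ref{lm:info_spreading} and Observation~\ref{obs:exp} carry the flags EXP and phase, the value srcID, and the estimate $p$ across rounds. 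Since $\mathcal{A}.p$ is non-decreasing and bounded by $p$, it changes at most $p$ times. This lets us split the execution into at most $p+1$ epochs, each with a fixed common estimate $\hat p$.

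\textbf{Estimation stage.} Within one epoch, we define a potential $\Phi_r$ as the number of holes plus the number of nodes occupied only by inactive agents. By Lemma~\ref{lm:pipeline}, whenever some node holds at least $\hat p+2$ agents and a target exists, the pipeline fills a hole or activates a fully inactive node without creating a new hole. Each intermediate node on the path loses its minimum active agent but gains one, so no node is emptied. The total number of visited nodes is therefore non-decreasing, and each successful hole-filling strictly increases it. Every round with a hole and an overfull node therefore either fills a hole or reduces the blocking.

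We need one more fact: the adversary cannot keep the target fully inactive forever. By fairness, an inactive-only node eventually gets an active agent. I would argue that between hole fillings only finitely many nodes can change status. From this, either every node is eventually visited, or the system reaches a round with no node holding $\mathcal{A}.p+2$ agents. In the latter case Observation~\ref{obs:neg2hole} gives at most one hole, and Lemma~\ref{lm:key} gives $\tau\le 1$. If $\tau\ne 1$, every node is visible and occupied, so setting EXP is correct.

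\textbf{Source stage.} The main obstacle, I expect, is the correctness of the source stage. With exactly one hole, no node holds more than $p+1$ agents, and the agent count is $k\ge (n-2)(p+1)+1$. A counting argument as in Lemma~\ref{lm:key} should then show that the configuration is $\mathcal{C}^*$, or one step away from it, with the source being the minimum node. The claim to establish is that the source stage keeps $\mathcal{C}^*$ invariant. Agents push from the source's extra active agents along a shortest path to the inactive-only node $w_1$, and the srcID agent never moves. As in the argument of Section~\ref{sec:tech}, the hole is eventually adjacent either to a node with $p+1$ agents, which has an active agent since at most $p$ can be deactivated, or to the single-agent node, which is eventually activated by fairness. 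Sub-case~1 of Case~2 then visits the hole.

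I would also verify that each EXP-setting branch fires only when all nodes are visited. This includes the case $|ID(v_1)|=1$ with the source missing or alone, which must mean that no hole remains.

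\textbf{Complexity.} Each round that moves agents moves at most one agent per node along a shortest path in $G'$, so it costs at most $\hat D+1$ moves. Such a round fills a hole, which happens at most $n$ times, or advances an inactive node, and I would bound these events by $O(k)$ using the fact that $\hat p$ changes at most $p$ times. Together this gives $O(k\hat D)$ moves. For memory, each agent stores its ID, the value $p$, two bits, and srcID, which is $O(\log n+\log p)$.
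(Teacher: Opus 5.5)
Your two-stage outline (estimate $\hat p$ and pipeline until no node is overfull, then fix a source) matches the paper's proof. However, two load-bearing steps would fail as written.

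\textbf{Termination of the estimation stage.} Your potential (holes plus inactive-only nodes) is not monotone. The adversary picks a fresh set of at most $p$ agents to deactivate every round, so a node you just ``activated'' can be replaced at once by a new inactive-only node elsewhere. Statuses can therefore change infinitely often, and the sentence ``between hole fillings only finitely many nodes can change status'' is unsupported and false in general.

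The missing idea is a surplus potential $\Psi_r=\sum_{v}\max\{0,|v|-(\hat p+1)\}$, where $|v|$ is the number of agents at $v$. The paper uses this implicitly when it says that in each pipeline round an overfull node loses an agent. In a pipeline round only the minimum active agent of each path node moves. So the origin $\overline{w}_1$ (with at least $\hat p+2$ agents) loses exactly one agent and intermediate nodes are unchanged. The target is either a hole, ending with one agent, or a visible inactive-only node, which holds at most $\mathcal{X}\le\hat p$ agents and ends with at most $\hat p+1$. Raising $\hat p$ only lowers $\Psi$.

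Moreover, whenever a node is overfull and some hole or inactive-only node exists, the component $H$ of $\overline{w}_1$ is not all of $\mathcal{G}_r$. By $1$-interval connectivity it then has a boundary node of exactly that kind, so the pipeline really runs. Hence $\Psi\le k$ drops in every round before $r_1$, giving $r_1\le k$ and at most $k$ pipelines in this stage. This is also the count your complexity paragraph needs. The fact that $\hat p$ changes at most $p$ times does not bound how many inactive-only targets are served within one epoch.

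\textbf{The source stage.} Your claim that at $r_1$ the configuration is $\mathcal{C}^*$ or one step away is false. Take $n=5$ and $k=16$, so $p=4$: one hole and four nodes holding $4$ agents each is a legitimate configuration at $r_1$ once $\hat p\ge 3$. Yet reaching $\mathcal{C}^*$, i.e.\ populations $(5,5,5,1)$, takes three transfers. In general the deficit of up to $p$ below $p+1$ per occupied node can be spread over several nodes, so an invariance argument for $\mathcal{C}^*$ has nothing to start from.

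The paper instead \emph{drains the source}. The source is a minimum-population node and the srcID agent never moves. Whenever another agent there is active, it is piped to an inactive-only node; this creates no hole and keeps every node at most $p+1$, since an inactive-only node holds at most $p$ agents. After at most $p$ such pipelines the source holds only its srcID agent.

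If a hole still exists at that point, $k\ge(n-2)(p+1)+1$ forces exactly $\mathcal{C}^*$. In $\mathcal{C}^*$, an inactive-only node other than the source would need $p+1$ inactive agents, which is impossible. So seeing such a node, while the source holds only its active srcID agent, certifies that no hole exists. If no inactive-only node exists, the hole is adjacent to an active node and is filled via Case~2, Sub-case~1 of the algorithm.

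These at most $p$ source pipelines give the second term of the $O(k\hat D)$ bound. This case analysis is precisely the check of the EXP-setting branches that you defer.
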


\begin{proof}
Due to Lemma \ref{lm:pipeline}, if in round $r$ there exists a node in $\mathcal{G}_r$ with at least $\mathcal{A}.p+2$ agents, then by the end of round $r$ either a hole is filled or a node containing only inactive agents receives an active agent, unless $\mathcal{G}_r$ contains no hole and no node consisting only of inactive agents, or there already exists an active agent $\mathcal{B}$ with $\mathcal{B}.\text{EXP} = 1$ (or $\mathcal{B}.\text{phase} = 1$) in round $r$. Therefore, in each such round, the number of agents at one of the nodes containing at least $\mathcal{A}.p+2$ agents decrease by at least one. Hence, there exists a round within the first $k$ rounds in which agents observe that every node of $G'$ contains at most $\mathcal{A}.p+1$ agents, where $\mathcal{A}$ is an active agent in that round. \underline{Let $r_1$ denote the first such round}. As per Observation \ref{obs:neg2hole}, there is at most one hole at round $r_1$. There are two possible cases:

\vspace{0.15cm}
\noindent \textbf{Case 1:} If there exists a round $r \le r_1$ in which an active agent observes that no port of $G'$ leads to a hole and no port leads to a node containing only inactive agents, this implies that every node of $G$ has at least one active agent at round $r$. As per {\ssync}\_$\mathcal{EXPO}$, all active agents can conclude correctly at round $r$ that exploration has been completed and set their $\text{EXP}$ parameter to $1$ as per. By Observation~\ref{obs:exp}, at least one agent is active in both rounds $r$ and $r+1$. Without loss of generality, let $\mathcal{A}$ be an agent that is active in both rounds $r$ and $r+1$. Therefore, at the beginning of round $r+1$, there is an agent with its $\text{EXP}$ parameter equal to $1$. Any agent activated in round $r+1$ that was inactive in round $r$ learns this information via global communication and updates its own $\text{EXP}$ parameter to $1$. By the fairness of the {\ssync} scheduler, every agent is eventually activated and hence eventually sets its $\text{EXP}$ parameter to $1$. This step also guarantees that each agent is aware that exploration has been achieved within finite but unbounded time.

\vspace{0.15cm}
\noindent \textbf{Case 2:} As per {\ssync}\_$\mathcal{EXPO}$, each active agent $\mathcal{A}$ at round $r_1$ updates $\mathcal{A}.\text{srcID}$ to the minimum ID of a node that contains the smallest number of agents. This node can also be a node that contains only inactive agents. By Lemma~\ref{lm:info_spreading}, there exists an active agent $\mathcal{B}$ in every round $r_2 \ge r_1$ such that $\mathcal{B}.\text{srcID} \neq \bot$ and $\mathcal{B}.\text{phase}=1$. Any agent $\mathcal{A}_1$ with $\mathcal{A}_1.\text{srcID} = \bot$ and $\mathcal{A}_1.\text{phase} = 0$ updates $\mathcal{A}_1.\text{srcID} = \mathcal{B}.\text{srcID}$ and $\mathcal{A}_1.\text{phase}=1$. Therefore, this source is fixed in every round $r_2 \ge r_1$, and agents execute the algorithm of Case 2 of Phase 3 of {\ssync}\_$\mathcal{EXPO}$ in every $r_2\geq r_1$. Now, we consider the following sub-cases in which exploration can be correctly concluded.

\begin{enumerate}
    \item \textbf{Sub-case 1:} At some round $r_2 \ge r_1$, if active agents find that there are two distinct nodes containing only inactive agents (i.e., $\tau \ge 2$), then every active agent $\mathcal{A}$ sets $\mathcal{A}.\text{EXP}=1$ as per {\ssync}\_$\mathcal{EXPO}$. This correctly implies that exploration is completed at round $r_2$ for the following reason. 
    
    Since active agents observe two distinct nodes containing only inactive agents (i.e., $\tau \ge 2$), no hole is present. Indeed, if a hole were present, then $\tau \le 1$ by Lemma~\ref{lm:key}. Therefore, every active agent $\mathcal{A}$ correctly concludes that exploration is achieved at round $r_2$. As in Case~1, within finite but unbounded time, each agent updates its $\text{EXP}$ parameter to $1$.

    \vspace{0.15cm}
    \item \textbf{Sub-case 2:} At some round $r_2 \ge r_1$, if an active agent $\mathcal{A}$ does not find the node corresponding to $\mathcal{A}.\text{srcID}$ in its 1-hop neighborhood and does not receive this information from the 1-hop views of other active agents, but there exists a port of $G'$ that leads to a node containing only inactive agents, then every active agent $\mathcal{A}$ sets $\mathcal{A}.\text{EXP}=1$ as per {\ssync}\_$\mathcal{EXPO}$. This correctly implies that exploration is completed at round $r_2$ for the following reason. 
    
    If a hole were present, then each active agent would receive information about $\mathcal{A}.\text{srcID}$ in round $r_2$ as $\tau\leq 1$ due to Lemma \ref{lm:key}. Since this does not happen, no hole is present, and hence every active agent correctly concludes that exploration is achieved. As in Case~1, within finite but unbounded time, each agent updates its $\text{EXP}$ parameter to $1$.

    \vspace{0.15cm}
    \item \textbf{Sub-case 3:} At some round $r_2 \ge r_1$, if one of the ports of $G'$ leads to a hole, as per {\ssync}\_$\mathcal{EXPO}$, one active agent moves from a node of $G'$ to the hole, and every active agent $\mathcal{A}$ sets $\mathcal{A}.\text{EXP}=1$ as per {\ssync}\_$\mathcal{EXPO}$. This correctly implies that exploration is completed at round $r_2$ for the following reason. 
    
    Since at round $r_1$, each active agent observes at most $\mathcal{A}.p+1$ agents, there is at most one hole due to Observation \ref{obs:neg2hole}. As per {\ssync}\_$\mathcal{EXPO}$, in subsequent rounds $r' \geq r_1+1$, movement occurs only when a node containing only inactive agents is observed by active agents, and node $v$ with $\mathcal{A}.\text{srcID}$ has at least one active agent other than $\mathcal{A}.\text{srcID}$. Therefore, no new hole is created. If round $r_2$ is the first round where one of the ports of $G'$ leads to a hole, the hole is visited by at least one agent in round $r_2$. Therefore, every active agent correctly concludes that exploration is achieved. As in Case~1, within finite but unbounded time, each agent updates its $\text{EXP}$ parameter to $1$.

    \vspace{0.15cm}
    \item \textbf{Sub-case 4:} Let $v$ be the node where $\mathcal{A}.\text{srcID}$ is present. Consider any round $r_2 \ge r_1$ such that node $v$ contains only the agent with $\mathcal{A}.\text{srcID}$, this agent is active, and there exists at least one node containing only inactive agents. In this situation, every active agent $\mathcal{A}$ sets $\mathcal{A}.\text{EXP}=1$ according to {\ssync}\_$\mathcal{EXPO}$, and exploration is correctly concluded due to the following reasons.
    
    At round $r_1$, active agents observe that at most $\mathcal{A}.p+1$ agents are present at any node in $G'$. Hence, no node in $\mathcal{G}_{r_1}$ contains at least $p+2$ agents as $\mathcal{A}.p\leq p$. As per {\ssync}\_$\mathcal{EXPO}$, in every round $r' \geq r_1+1$, movement occurs only when a node containing only inactive agents is observed by active agents, and node $v$ with $\mathcal{A}.\text{srcID}$ has at least one active agent other than $\mathcal{A}.\text{srcID}$, irrespective of $\mathcal{A}.\text{srcID}$ is active or inactive. Therefore, every node in $\mathcal{G}_{r'}$ contains at most $p+1$ agents at the end of round $r'$. If a hole exists at round $r_2$, no node contains at least $p+2$ agents, and one node $v$ has exactly one active agent ($\mathcal{A}.\text{srcID}$), then at round $r_2$, agents are in $\mathcal{C}^*$ configuration. At round $r_2$, active agents observe that there exists a node containing only inactive agents other than node $v$ (recall node $v$ has exactly one agent, and it is active). From this, they infer that the system is not in configuration $\mathcal{C}^*$ and that no hole exists at round $r_2$. Consequently, every active agent at round $r_2$ correctly concludes that exploration has been achieved. As in Case~1, within finite but unbounded time, every agent updates its $\text{EXP}$ parameter to $1$.
\end{enumerate}

Let $v$ be the node where $\mathcal{A}.\text{srcID}$ is located at round $r_1$. Since at round $r_1$ the active agent $\mathcal{A}$ selects $\mathcal{A}.\text{srcID}$ based on a node that contains the minimum number of agents or a node containing only inactive agents, the number of agents at node $v$ is at most $p$. As per {\ssync}\_$\mathcal{EXPO}$, whenever an agent at node $v$ is active other than $\mathcal{A}.\text{srcID}$, all active agents compute a shortest path $P$ between $v$ and $w_1$ (where $w_1$ denotes a node containing only inactive agents). Using a pipeline strategy along $P$, node $w_1$ receives one agent. If Sub-cases~1,~2, and~3 do not occur within at most $p$ such a pipeline from node $v$, then Sub-case~4 must occur. Therefore, from round $r_1$ onward, there exists a round $r_2$ within finite but unbounded time in which one of the four sub-cases applies.

The number of moves after which at least one active agent $\mathcal{A}$ sets $\mathcal{A}.\text{EXP}=1$ is at most $O(k\hat{D})$. This bound follows from the following reasons. At most $k$ pipelines are required for the agents to reach a configuration in which active agents observe that each node contains at most $\mathcal{A}.p+1$ agents, and thereafter at most $p= \left\lfloor \frac{k-1}{n-2} \right\rfloor - 1\leq k$ additional pipeline can occur from the node $v$ where $\mathcal{A}.\text{srcID}$ is present. As node $v$ cannot contain at most $p$ agents. Each pipeline length cannot be more than $\hat{D}$. Therefore, the move complexity is $O(k\hat{D})$.

Each agent $\mathcal{A}$ maintains the variables $\mathcal{A}.\text{ID}$, $\mathcal{A}.p$, $\mathcal{A}.\text{EXP}$, $\mathcal{A}.\text{srcID}$, and $\mathcal{A}.\text{phase}$. By definition, the parameters $\mathcal{A}.\text{EXP}$ and $\mathcal{A}.\text{phase}$ require $O(1)$ memory. The parameters $\mathcal{A}.\text{ID}$ and $\mathcal{A}.\text{srcID}$ store agent identifiers and therefore require $O(\log n)$ bits as per our model. By Observation~\ref{obs:upper_p}, we have $\mathcal{A}.p \le p$, and hence storing $\mathcal{A}.p$ requires $O(\log p)$ bits. Since $p = \left\lfloor \frac{k-1}{n-2} \right\rfloor - 1 \le k$, this requires $O(\log \frac{k}{n})$ bits. Therefore, the total memory required by each agent is $O(\max(\log n, \log p))$.

 This completes the proof.
\end{proof}

\section{Conclusion}\label{sec:con}
We studied exploration in $1$-interval connected graphs under the {\ssync} scheduler with the dynamic port labeling. We established a bound on the adversary’s deactivation power and presented an exploration algorithm that matches this bound. We have identified the minimal visibility and communication assumptions required for solvability; however, our algorithm relies on global communication. Though we could not formally prove global communication as a necessary condition, the following challenges appear if one considers even $(\hat{D}-1)$-hop communication and $1$-hop visibility. An adversary can exploit dynamic port labeling and selective activation to maintain indistinguishable local views across agents, thereby disrupting coordination. Hence, the pipeline strategy does not work properly and may create new hole(s). This highlights the technical challenge of maintaining coordinated progress under the dynamic port labeling and partial activation. In~\cite{saxena_2025,Saxena_DISC,Ajay_dynamicdisp}, the global communication is assumed to establish feasibility under the {\fsync} scheduler. This leads to two questions: whether global communication is necessary for solvability, and for which values of $k$ exploration can be achieved without additional assumptions.

\bibliography{bib}
\end{document}